\def\Csp{{\mathbb{C}}}
\def\Rsp{{\mathbb{R}}}
\def\Thetasp{{\mbox{\boldmath $\Theta$}}}
\def\Thetaspsc{{\mbox{\boldmath $\Thetasc$}}}
\def\bOmega{{\mbox{\boldmath $\Omega$}}}
\def\bOmegasc{{\mbox{\boldmath $\bOmega$}}}
\def\pxtheta{P_{\Xmatsc;\thetavecsc}}
\def\pxthetazero{P_{\Xmatsc;\thetavecsc_{0}}}
\def\qxtheta{Q^{\left(u\right)}_{\Xmatsc;\thetavecsc}}
\def\thetaveczero{\thetavec_{0}}
\def\thetaveczerosc{\thetavecsc_{0}}
\def\thetavec{{\bf{\theta}}}
\newcommand{\avec}{{\bf{a}}}
\newcommand{\yvec}{{\bf{y}}}
\newcommand{\xvec}{{\bf{x}}}
\newcommand{\hvec}{{\bf{h}}}
\newcommand{\etavec}{{\bf{\eta}}}
\newcommand{\zerovec}{{\bf{0}}}
\newcommand{\Gammamat}{{\bf{\Gamma}}}
\newcommand{\Amat}{{\bf{A}}}
\newcommand{\Bmat}{{\bf{B}}}
\newcommand{\Cmat}{{\bf{C}}}
\newcommand{\Fmat}{{\bf{F}}}
\newcommand{\Gmat}{{\bf{G}}}
\newcommand{\Imat}{{\bf{I}}}
\newcommand{\Rmat}{{\bf{R}}}
\newcommand{\Wmat}{{\bf{W}}}
\newcommand{\Xmat}{{\bf{X}}}
\newcommand{\Xmatsc}{{\mbox{\boldmath \tiny $\Xmat$}}}
\newcommand{\Zmatsc}{{\mbox{\boldmath \tiny $\Zmat$}}}
\newcommand{\Zmat}{{\bf{Z}}}
\def\bomega{{\mbox{\boldmath $\omega$}}}
\def\bOmega{{\mbox{\boldmath $\Omega$}}}
\def\bOmegasc{{\mbox{\boldmath \tiny $\bOmega$}}}
\def\bvartheta{{\mbox{\boldmath $\vartheta$}}}
\def\bvarthetasc{{\mbox{\boldmath \tiny $\vartheta$}}}
\def\bSigma{{\mbox{\boldmath $\Sigma$}}}
\def\bomega{{\mbox{\boldmath $\omega$}}}
\def\bGamma{{\mbox{\boldmath $\Gamma$}}}
\def\psivec{{\mbox{\boldmath $\psi$}}}
\def\tvec{{\mbox{\boldmath $t$}}}
\def\etavec{{\mbox{\boldmath $\eta$}}}
\def\thetavec{{\mbox{\boldmath $\theta$}}}
\def\thetavecsc{{\mbox{\boldmath \tiny {$\theta$}}}}
\def\etavecsc{{\mbox{\boldmath \tiny {$\eta$}}}}
\def\Thetasc{{\mbox{\tiny {$\Theta$}}}}
\def\XCalsc{{\mbox{\tiny $\mathcal{X}$}}}
\def\XCal{{\mbox{$\mathcal{X}$}}}
\def\muvec{{\mbox{\boldmath $\mu$}}}
\newcommand{\be}{\begin{equation}}
\newcommand{\ee}{\end{equation}}
\newcommand{\beqna}{\begin{eqnarray}}
\newcommand{\eeqna}{\end{eqnarray}}
\newtheorem{Theorem}{Theorem}
\newtheorem{Lemma}{Lemma}
\newtheorem{Corollary}{Corollary}
\newtheorem{Proposition}{Proposition}
\newtheorem{Definition}{Definition}
\newtheorem{Identity}{Identity}
\newtheorem{Relation}{Relation}
\title{Measure Transformed Quasi Score Test with Application to Location Mismatch Detection}
\name{Koby Todros}
\address{Ben-Gurion University of the Negev}
\begin{document}
\ninept
\maketitle
\begin{abstract}
In this paper, we develop a generalization of the Gaussian quasi score test (GQST) for composite binary hypothesis testing. The proposed test, called measure transformed GQST (MT-GQST), is based on the score-function of the measure transformed Gaussian quasi maximum likelihood estimator (MT-GQMLE) that operates by empirically fitting a Gaussian model to a transformed probability measure of the data. By judicious choice of the transform we show that, unlike the GQST, the proposed MT-GQST involves higher-order statistical moments and can gain resilience to outliers, leading to significant mitigation of the model mismatch effect on the decision performance. A data-driven procedure for optimal selection of the measure transformation parameters is developed that 
minimizes the spectral norm of the empirical asymptotic error-covariance of the MT-GQMLE. This amounts to maximization of an empirical worst-case asymptotic local power at a fixed asymptotic size. The MT-GQST is applied to location mismatch detection of a near-field point source in a simulation example that illustrates its robustness to outliers.
\end{abstract}
\begin{keywords}
Composite hypothesis testing, higher-order statistics, probability measure transform, robust statistics.
\end{keywords}
\section{Introduction}  
\label{sec:intro}
The score test \cite{Lehmann}-\cite{LM}, also known as Rao's score test, or the Lagrangian multiplier test, is a well established technique for composite binary hypothesis testing \cite{Lehmann}, \cite{Scharf}, \cite{Kay} that is based on the score-function\footnote{Score-function of an estimator is referred here to as the gradient of the estimator's objective function w.r.t. the vector parameter.}of the maximum likelihood estimator (MLE). Unlike the generalized likelihood ratio test (GLRT) \cite{Lehmann}, \cite{Kay} and Wald's test \cite{Lehmann}, \cite{Kay}, \cite{Wald}, it does not necessitate the maximum likelihood estimate under the alternative hypothesis, and therefore, may be significantly easier to compute. However, similarly to Wald's test and the GLRT, it assumes knowledge of the likelihood function. In many practical scenarios the likelihood function is unknown, and therefore, alternatives to the score test become attractive. 

A popular alternative of this kind is the Gaussian quasi score test (GQST) \cite{White}-\cite{bollerslev1992quasi} that is based on the score-function of the Gaussian quasi MLE (GQMLE) \cite{White}, \cite{bollerslev1992quasi}-\cite{Viberg}, which assumes that the samples obey Gaussian distribution. The GQST has gained popularity due to its implementation simplicity and ease of performance analysis that arise from the convenient Gaussian distribution. Despite the model mismatch, introduced by the normality assumption, the GQST has the appealing property of consistency under some mild regularity conditions \cite{WhiteBook}. However, in some circumstances, such as for certain types of non-Gaussian data, deviation from normality can inflict poor decision performance. This can occur when the first and second-order statistical moments are weakly identifiable over the parameter space, or in the case of heavy-tailed data when the non-robust sample mean and covariance provide poor estimates in the presence of outliers.

In this paper, a generalization of the GQST is developed. The proposed generalization, called measure transformed GQST (MT-GQST), is based on the score-function of the measure transformed GQMLE (MT-GQMLE) \cite{MTQML}, \cite{MTQMLFull} that operates by empirically fitting a Gaussian model to a transformed probability measure of the data. The considered measure-transformation, also applied in  \cite{MTCCA}-\cite{MTRT_B}, is structured by a non-negative function, called the MT-function, that weights the probability distribution of the data. By judicious choice of the MT-function we show that, unlike the GQST, the proposed MT-GQST involves higher-order statistical moments, can gain resilience to outliers, and yet have the computational and implementation advantages of the GQST.

Under some mild regularity conditions, we show that the MT-GQST is consistent. We also show that the asymptotic distribution of the test-statistic is central chi-squared under the null hypothesis, and non-central chi-squared under a sequence of local alternatives, with non-centrality parameter that is increasing with the inverse asymptotic error-covariance of the MT-GQMLE. 
A data driven procedure for optimal selection of the MT-function within some parametric class is developed that minimizes the spectral norm of the empirical asymptotic error-covariance of the MT-GQMLE. We show that this minimization amounts to maximization of an empirical worst-case asymptotic local power at a fixed asymptotic size. 

The  MT-GQST is illustrated for detecting a mismatch in the location of a near-field point source in the presence of spherically contoured noise. By specifying the MT-function within the family of zero-centred Gaussian functions parameterized by a scale parameter, we show that the MT-GQST outperforms the non-robust GQST and another robust alternative, and attains detection performance that are significantly closer to those of the omniscient score test that, unlike the MT-GQST, requires the knowledge of the likelihood function.

The paper is organized as follows. In Sections \ref{PMT}, the MT-GQMLE is reviewed. In Section \ref{MT_GQST}, the score-function of the MT-GQMLE is used to construct the proposed MT-GQST. The proposed test is applied to location mismatch detection in Section \ref{Examp}. In Section \ref{Conclusion}, the main points of this contribution are summarized. 
\section{Measure transformed GQMLE: Review}
\label{PMT}
We begin by reviewing the principles of the parametric probability measure transform \cite{MTQML}, \cite{MTQMLFull}. We then define a parametric measure-transformed mean vector and covariance matrix and show their relation to higher-order statistical moments. Furthermore, we formulate their strongly consistent estimates and state conditions for outlier resilience. Finally, these quantities are used to construct the MT-GQMLE \cite{MTQML}, \cite{MTQMLFull}, whose objective function will be used in the following section to obtain the proposed MT-GQST. 
\subsection{Probability measure transform}
We define the measure space $\left(\XCal,\mathcal{S}_{\XCalsc},\pxtheta\right)$, where $\XCal\subseteq\Csp^{p}$ is the observation space of a random vector $\Xmat$, $\mathcal{S}_{\XCalsc}$ is a $\sigma$-algebra over $\XCal$ and $\pxtheta$ is an unknown probability measure on $\mathcal{S}_{\XCalsc}$ parameterized by a vector parameter $\thetavec$ that belongs to a parameter space $\Thetasp\subseteq\Rsp^{m}$.
\begin{Definition}  
\label{Def1}
Given a non-negative function $u:\Csp^{p}\rightarrow\Rsp_{+}$ satisfying 
\begin{equation} 
\label{Cond}
0<{{{\rm{E}}}\left[u\left(\Xmat\right);P_{\Xmatsc;\thetavecsc}\right]}<\infty,
\end{equation}
where
${\rm{E}}\left[u\left(\Xmat\right);\pxtheta\right]\triangleq\int_{\XCalsc}u\left(\xvec\right)d\pxtheta\left(\xvec\right)$
and $\xvec\in\XCal$, a transform on $P_{\Xmatsc;\thetavecsc}$ is defined via the relation:
\begin{equation}
\label{MeasureTransform} 
\qxtheta\left(A\right)\triangleq{\rm{T}}_{u}\left[\pxtheta\right]\left(A\right)=\int_{A}\varphi_{u}\left(\xvec;\thetavec\right)d\pxtheta\left(\xvec\right),
\end{equation}
where $A\in\mathcal{S}_{\XCalsc}$ and
$\varphi_{u}\left(\xvec;\thetavec\right)\triangleq{u\left(\xvec\right)}/{{{\rm{E}}}\left[u\left(\Xmat\right);\pxtheta\right]}$.
The function $u\left(\cdot\right)$ is called the MT-function. [Proof: see Appendix A in \cite{Todros2}]
\end{Definition}  
\begin{Proposition}[Properties of the transform]
\label{Prop1}
Let $\qxtheta$ be defined by relation (\ref{MeasureTransform}). 
Then
\begin{inparaenum}[1)]
\item
\label{P1}
$\qxtheta$ is a probability measure on $\mathcal{S}_{\XCalsc}$.
\item
\label{P2}
$\qxtheta$ is absolutely continuous w.r.t. $\pxtheta$, with Radon-Nikodym derivative \cite{Folland}:
\begin{equation}
\label{MeasureTransformRadNik}     
{d\qxtheta\left(\xvec\right)}/{d\pxtheta\left(\xvec\right)}=\varphi_{u}\left(\xvec;\thetavec\right).
\end{equation}
\end{inparaenum} 
\end{Proposition}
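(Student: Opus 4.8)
The plan is to verify the three measure axioms directly from the defining relation (\ref{MeasureTransform}), and then obtain absolute continuity and the Radon--Nikodym derivative essentially for free. I would first observe that condition (\ref{Cond}) guarantees the normalizing constant ${\rm{E}}\left[u\left(\Xmat\right);\pxtheta\right]$ to be a strictly positive, finite real number, so that $\varphi_{u}\left(\xvec;\thetavec\right)={u\left(\xvec\right)}/{{\rm{E}}\left[u\left(\Xmat\right);\pxtheta\right]}$ is a well-defined, non-negative, $\mathcal{S}_{\XCalsc}$-measurable function, the measurability of $u\left(\cdot\right)$ being implicit in Definition \ref{Def1}; consequently $\qxtheta\left(A\right)=\int_{A}\varphi_{u}\left(\xvec;\thetavec\right)d\pxtheta\left(\xvec\right)$ is a well-defined element of $[0,\infty)$ for every $A\in\mathcal{S}_{\XCalsc}$.

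For the first part, non-negativity of $\qxtheta$ and $\qxtheta\left(\emptyset\right)=0$ are immediate from $\varphi_{u}\geq0$. Countable additivity is the textbook fact that the indefinite integral of a non-negative measurable function against a measure is itself a measure: for pairwise disjoint $\left\{A_{n}\right\}_{n\in\Nsp}\subset\mathcal{S}_{\XCalsc}$, I would expand the indicator $\mathbbm{1}_{\bigcup_{n}A_{n}}=\sum_{n}\mathbbm{1}_{A_{n}}$ and apply the monotone convergence theorem to obtain $\qxtheta\left(\bigcup_{n}A_{n}\right)=\sum_{n}\qxtheta\left(A_{n}\right)$. The normalization $\qxtheta\left(\XCal\right)=1$ then follows by pulling the constant out of $\int_{\XCal}u\left(\xvec\right)d\pxtheta\left(\xvec\right)={\rm{E}}\left[u\left(\Xmat\right);\pxtheta\right]$, which cancels the denominator. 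This establishes that $\qxtheta$ is a probability measure on $\mathcal{S}_{\XCalsc}$.

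For the second part, absolute continuity $\qxtheta\ll\pxtheta$ is immediate: if $\pxtheta\left(A\right)=0$ then the integral of any measurable function over $A$ vanishes, hence $\qxtheta\left(A\right)=0$. Relation (\ref{MeasureTransform}) is precisely the assertion that $\varphi_{u}\left(\cdot;\thetavec\right)$ is a density of $\qxtheta$ with respect to $\pxtheta$; invoking the Radon--Nikodym theorem \cite{Folland}, such a density is unique up to $\pxtheta$-null sets, which gives (\ref{MeasureTransformRadNik}).

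I do not expect a genuine obstacle here. The only point that deserves care is to confirm, before performing any manipulation, that every object in sight is well defined --- that is, that condition (\ref{Cond}) is exactly what is needed for $\varphi_{u}\left(\cdot;\thetavec\right)$ to be a bona fide probability density --- and to recognize the countable-additivity step as the standard ``an integral induces a measure'' lemma rather than something requiring a fresh argument.
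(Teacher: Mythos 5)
Your proof is correct and is the standard verification that the paper itself defers to (the proposition's proof is not reproduced here but cited to Appendix A of \cite{Todros2}, which argues along exactly these lines): condition (\ref{Cond}) makes $\varphi_{u}$ a well-defined non-negative normalized density, the indefinite integral of a non-negative measurable function is a measure, total mass one follows by cancellation of the normalizing constant, and absolute continuity with Radon--Nikodym derivative $\varphi_{u}$ is immediate from the defining relation (\ref{MeasureTransform}). No gaps; nothing further is needed.
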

The probability measure $\qxtheta$ is said to be generated by the MT-function $u\left(\cdot\right)$. 
\subsection{The MT-mean and MT-covariance}
According to (\ref{MeasureTransformRadNik}) the mean vector and covariance matrix of $\Xmat$ under $\qxtheta$, that are assumed to be known parameterized functions of $\thetavec$, are given by:
\begin{equation}
\label{MTMean} 
\muvec^{\left(u\right)}_{\Xmatsc;\thetavecsc}\triangleq{\rm{E}}[\Xmat\varphi_{u}\left(\Xmat;\thetavec\right);\pxtheta]
\end{equation}
and
\begin{equation}  
\label{MTCovZ}    
\bSigma^{\left(u\right)}_{\Xmatsc;\thetavecsc}\triangleq{\rm{E}}[\Xmat\Xmat^{H}\varphi_{u}\left(\Xmat;\thetavec\right);\pxtheta]-\muvec^{\left(u\right)}_{\Xmatsc;\thetavecsc}\muvec^{\left(u\right)H}_{\Xmatsc;\thetavecsc},
\end{equation}
respectively. Equations  (\ref{MTMean})  and (\ref{MTCovZ}) imply that $\muvec^{\left(u\right)}_{\Xmatsc;\thetavecsc}$ and $\bSigma^{\left(u\right)}_{\Xmatsc;\thetavecsc}$ are weighted mean and covariance of $\Xmat$ under $\pxtheta$, with the weighting function $\varphi_{u}\left(\cdot;\cdot\right)$ defined below (\ref{MeasureTransform}). By modifying the MT-function $u\left(\cdot\right)$, such that the condition (\ref{Cond}) is satisfied,  the MT-mean and MT-covariance under $\qxtheta$ are modified. In particular, by choosing $u\left(\cdot\right)$ to be any non-zero constant valued function we have $\qxtheta=\pxtheta$, for which the standard mean vector $\muvec_{\Xmatsc;\thetavecsc}$ and covariance matrix $\bSigma_{\Xmatsc;\thetavecsc}$ are obtained. Alternatively, when $u\left(\cdot\right)$ is non-constant analytic function, which has a convergent Taylor series expansion, the resulting MT-mean and MT-covariance involve higher-order statistical moments of $\pxtheta$.  
\subsection{The empirical MT-mean and MT-covariance}
\label{EmpMT}
Given a sequence of $N$ i.i.d. samples from $\pxtheta$ the empirical estimators of $\muvec^{\left(u\right)}_{\Xmatsc;\thetavecsc}$ and $\bSigma^{\left(u\right)}_{\Xmatsc;\thetavecsc}$ are defined as:
\begin{equation}   
\label{Mu_u_Est}  
\hat{\muvec}^{\left(u\right)}_{\Xmatsc}\triangleq\sum\nolimits_{n=1}^{N}\Xmat_{n}\hat{\varphi}_{u}\left(\Xmat_{n}\right)
\end{equation}
and
\begin{equation}    
\label{Rx_u_Est}      
\hat{\bSigma}^{\left(u\right)}_{\Xmatsc}\triangleq\sum\nolimits_{n=1}^{N}\Xmat_{n}\Xmat^{H}_{n}\hat{\varphi}_{u}\left(\Xmat_{n}\right)
-\hat{\muvec}^{\left(u\right)}_{\xvec}\hat{\muvec}^{\left(u\right)H}_{\xvec},
\end{equation}
respectively, where    
$\hat{\varphi}_{u}\left(\Xmat_{n}\right)\triangleq{u\left(\Xmat_{n}\right)}/{\sum_{n=1}^{N}u\left(\Xmat_{n}\right)}$.
According to Proposition 2 in \cite{Todros2}, if  ${\rm{E}}\left[\left\|\Xmat\right\|^{2}u\left(\Xmat\right);\pxtheta\right]<\infty$ then
$\hat{\muvec}^{\left(u\right)}_{\Xmatsc}\xrightarrow[N\rightarrow\infty]{\textrm{w.p. 1}}{\muvec}^{\left(u\right)}_{\Xmatsc;\thetavecsc}$ and $\hat{\bSigma}^{\left(u\right)}_{\Xmatsc}\xrightarrow[N\rightarrow\infty]{\textrm{w.p. 1}}{\bSigma}^{\left(u\right)}_{\Xmatsc;\thetavecsc}$, where ``$\xrightarrow{\textrm{w.p. 1}}$'' denotes convergence with probability (w.p.) 1 \cite{MeasureTheory}.

Robustness of the empirical MT-covariance (\ref{Rx_u_Est}) to outliers was studied in \cite{Todros2} using its influence function \cite{Hampel} which describes the effect on the estimator of an infinitesimal contamination at some point $\yvec\in\Csp^{p}$. An estimator is said to be B-robust if its influence function is bounded  \cite{Hampel}. Similarly to the proof of Proposition 3 in \cite{Todros2} it can be shown that if the MT-function $u(\yvec)$ and the product $u(\yvec)\|\yvec\|^{2}$ are bounded over $\Csp^{p}$ then the influence functions of both (\ref{Mu_u_Est}) and (\ref{Rx_u_Est}) are bounded.
\subsection{The MT-GQMLE}
\label{QMLEst}
Given a sequence of samples from $\pxtheta$, the MT-GQMLE \cite{MTQMLFull} of $\thetavec$ minimizes the empirical Kulback-Leibler divergence \cite{Kullback} between the transformed probability measure $Q^{\left(u\right)}_{\Xmatsc;\thetavecsc}$ and a complex circular Gaussian probability distribution $\Phi^{\left(u\right)}_{\Xmatsc;\bvarthetasc}$  \cite{Schreier}, characterized by the MT-mean $\muvec^{(u)}_{\Xmatsc;\bvarthetasc}$ and MT-covariance $\bSigma^{(u)}_{\Xmatsc;\bvarthetasc}$. We have shown that this minimization amounts to maximization of the objective function:
\begin{equation}
\label{ObjFun}
J_{u}\left(\bvartheta\right)\triangleq
-D[\hat{\bSigma}^{\left(u\right)}_{\Xmatsc}||\bSigma^{\left(u\right)}_{\Xmatsc;\bvarthetasc}] -
\|\hat{\muvec}^{\left(u\right)}_{\Xmatsc}-{\muvec}^{\left(u\right)}_{\Xmatsc;\bvarthetasc}\|^{2}_{{\bOmegasc^{(u)}_{\xvec;\bvarthetasc}}},
\end{equation}
where 
$D\left[\Amat||\Bmat\right]\triangleq{\rm{tr}}\left[\Amat\Bmat^{-1}\right]-\log\det\left[\Amat\Bmat^{-1}\right]-p$
is the log-determinant divergence \cite{LogDetDiv} between positive definite matrices $\Amat,\Bmat$, $\left\|\avec\right\|_{\Cmat}\triangleq\sqrt{\avec^{H}\Cmat\avec}$ denotes the weighted Euclidian norm of a vector $\avec$ with positive-definite weighting matrix $\Cmat$ and 
$\bOmega^{\left(u\right)}_{\Xmatsc;\bvarthetasc}\triangleq(\bSigma^{\left(u\right)}_{\Xmatsc;\bvarthetasc})^{-1}$. The MT-GQMLE is given by:
\begin{equation}
\label{PropEst}
\hat{\thetavec}_{u}=\arg\max\limits_{\bvarthetasc\in\Thetaspsc}J_{u}\left(\bvartheta\right).
\end{equation}
Under some mild regularity conditions, we have shown that the MT-GQMLE is asymptotically normal and unbiased with convergence rate of ${1/\sqrt{N}}$, i.e., 
$\sqrt{N}(\hat{\thetavec}_{u}-\thetavec)\xrightarrow{D}\mathcal{N}\left(\zerovec,\Rmat_{u}\left(\thetavec\right)\right)
\hspace{0.2cm}{\textrm{as}}\hspace{0.2cm}N\rightarrow\infty,
$
where ``$\xrightarrow{D}$'' denotes convergence in distribution \cite{MeasureTheory}. The asymptotic error-covariance is given by
\begin{equation}
\label{AMSE}  
\Rmat_{u}\left(\thetavec\right)=\Fmat^{-1}_{u}\left(\thetavec\right)\Gmat_{u}\left(\thetavec\right)\Fmat^{-1}_{u}\left(\thetavec\right),
\end{equation}
where
\begin{equation}   
\label{GDef}
\Gmat_{u}\left(\thetavec\right)\triangleq{\rm{E}}[u^2\left(\Xmat\right)\psivec_{u}\left(\Xmat;\thetavec\right)\psivec^{T}_{u}\left(\Xmat;\thetavec\right);P_{\Xmatsc;\thetavecsc}],
\end{equation}
\begin{equation}
\label{PsiDef}
\psivec_{u}\left(\Xmat;\thetavec\right)\triangleq\nabla_{\thetavecsc}\log\phi^{(u)}\left(\Xmat;\thetavec\right),
\end{equation}
\begin{equation}
\label{FDef}   
\Fmat_{u}\left(\thetavec\right)\triangleq-{\rm{E}}\left[u\left(\Xmat\right)\Gammamat_{u}\left(\Xmat;\thetavec\right);P_{\Xmatsc;\thetavecsc}\right],
\end{equation}
\begin{equation}
\label{GammaDef}  
\Gammamat_{u}\left(\Xmat;\thetavec\right)\triangleq\nabla^{2}_{\thetavecsc}\log\phi^{(u)}\left(\Xmat;\thetavec\right),
\end{equation}
$\phi^{(u)}\left(\Xmat;\thetavec\right)$ is the density  of the Gaussian measure $\Phi^{\left(u\right)}_{\Xmatsc;\thetavecsc}$ and it is assumed that $\Fmat_{u}\left(\thetavec\right)$ is non-singular.
\section{The Measure transformed Gaussian quasi score test}
\label{MT_GQST}
Given a sequence of samples from $\pxtheta$, we use the score-function of  the MT-GQMLE (\ref{PropEst}) to construct the proposed MT-GQST for the composite hypothesis testing problem:
\begin{eqnarray}
\label{CompHyp}
H_{0}:&\thetavec=\thetavec_{0}
\\\nonumber
H_{1}:&\thetavec\neq\thetavec_{0}.
\end{eqnarray}
Under some mild regularity conditions we show that the MT-GQST is consistent. Furthermore, we derive the asymptotic distribution of its test-statistic under the null hypothesis and under a sequence of local alternatives. Finally, a data driven procedure for optimal selection of the MT-function is developed.  
\subsection{The MT-GQST}
Notice that by (\ref{Mu_u_Est}) and (\ref{Rx_u_Est}) the objective function (\ref{ObjFun}) is an empirical estimate of $\bar{J}_{u}\left(\thetavec,\bvartheta\right)\triangleq-D[{\bSigma}^{(u)}_{\Xmatsc;\thetavecsc}||\bSigma^{(u)}_{\Xmatsc;\bvarthetasc}] -
\|{\muvec}^{(u)}_{\Xmatsc;\thetavecsc}-{\muvec}^{(u)}_{\Xmatsc;\bvarthetasc}\|^{2}_{{\bOmegasc^{(u)}_{\xvec;\bvarthetasc}}}$. One can verify that  when $\bar{J}_{u}\left(\thetavec,\bvartheta\right)$ is $\bvartheta$-differentiable it has a stationary point at $\bvartheta=\thetavec$. Assuming that this stationary point is unique $\nabla_{\bvarthetasc}\bar{J}_{u}\left(\thetavec,\thetaveczero\right)=\zerovec$ when $\thetavec=\thetaveczero$ and $\nabla_{\bvarthetasc}\bar{J}_{u}\left(\thetavec,\thetaveczero\right)\neq\zerovec$ when $\thetavec\neq\thetaveczero$. This motivates the use of $\nabla{J}_{u}\left(\thetavec\right)$, i.e., the score-function of the MT-GQMLE, for testing between $H_{0}$ and $H_{1}$. Hence, we define the normalized score-function: ${\etavec}_{u}\left(\thetavec\right)\triangleq\nabla{J}_{u}\left(\thetavec\right)\times(1/\sqrt{N})\sum_{n=1}^{N}u\left(\Xmat_{n}\right)$. By (\ref{Mu_u_Est})-(\ref{ObjFun})  
\begin{equation}
\label{TildeJ}
{\etavec}_{u}\left(\thetavec\right)=(1/\sqrt{N})\sum\nolimits_{n=1}^{N}u\left(\Xmat_{n}\right)\psivec_{u}\left(\Xmat_{n};\thetavec\right),
\end{equation}
where $\psivec_{u}\left(\Xmat;\thetavec\right)$ is defined in (\ref{PsiDef}). Furthermore, we define the empirical estimate of (\ref{GDef}):
\begin{equation}
\label{hatG}  
\hat{\Gmat}_{u}(\thetavec)\triangleq{N}^{-1}\sum\nolimits_{n=1}^{N}u^{2}\left(\Xmat_{n}\right)\psivec_{u}\left(\Xmat_{n};\thetavec\right)\psivec^{T}_{u}\left(\Xmat_{n};\thetavec\right).
\end{equation}
The MT-GQST for the hypothesis testing problem (\ref{CompHyp}) is defined as:
\begin{equation}
\label{MTRAO}
T_{u}\triangleq{\etavec^{T}_{u}\left(\thetaveczero\right)\hat{\Gmat}^{-1}_{u}(\thetaveczero)\etavec_{u}\left(\thetaveczero\right)}\mathop{\gtreqless}_{H_{0}}^{H_{1}}t,
\end{equation}
where $t\in\Rsp_{+}$ denotes a threshold.  By modifying the MT-function $u\left(\cdot\right)$ such that condition (\ref{Cond}) is
satisfied the MT-GQST is modified, resulting in a family of tests. In particular, when $u\left(\cdot\right)$ is any non-zero constant function $\qxtheta=\pxtheta$
and the standard GQST is obtained that only involves first and second-order statistical moments. 
\subsection{Asymptotic performance analysis}
Here, we study the asymptotic performance of the proposed test (\ref{MTRAO}). We assume that a sequence of i.i.d. samples $\Xmat_{n}$,  $n=1,\ldots,N$ from $\pxtheta$ is available and that the parameter space $\Thetasp$ is compact. We begin by stating some regularity conditions that will be used in the sequel:
\begin{enumerate}[({A}-1)]
\item
\label{A1}
${\rm{E}}\left[u\left(\Xmat\right)\psivec_{u}\left(\Xmat;\thetaveczero\right);\pxtheta\right]\neq\zerovec$ for $\thetavec\neq\thetaveczero$.
\item
\label{A2} 
${\rm{E}}[u^2\left(\Xmat\right)\psivec_{u}\left(\Xmat;\thetaveczero\right)\psivec^{T}_{u}\left(\Xmat;\thetaveczero\right);P_{\Xmatsc;\thetavecsc}]$ is non-singular.
\item
\label{A3}  
$\muvec^{\left(u\right)}_{\Xmatsc;\thetavecsc}$ and $\bSigma^{\left(u\right)}_{\Xmatsc;\thetavecsc}$ are twice continuously differentiable in $\Thetasp$.
\item  
\label{A4} 
${\rm{E}}\left[u^{4}\left(\Xmat\right);\pxtheta\right]$ and ${\rm{E}}\left[\left\|\Xmat\right\|^{8}u^{4}\left(\Xmat\right);\pxtheta\right]$ are bounded.
\item
\label{A5} 
$\Gmat_{u}\left(\thetavec\right)$ is bounded and non-singular.
\item
\label{A6}
The density of $\pxtheta$ is continuous in $\Thetasp$ a.e. over $\XCal$.
\item
\label{A7}
The Fisher information matrix  $\Imat_{\rm{FIM}}\left(\thetavec\right)$ \cite{Schreier} is bounded.
\end{enumerate}

The following proposition states consistency conditions.
\begin{Proposition}[Consistency]
\label{ConsProp}
Assume that conditions A-\ref{A1}$-$A-\ref{A4} are satisfied.
Then, for any $t\in\Rsp$
\begin{equation}
\label{InfLim}
{\rm{Pr}}\left[T_{u}>t\right]\xrightarrow[N\rightarrow{\infty}]{}1\hspace{0.2cm}{\text{under $H_{1}$}}.
\end{equation}
[A proof is given in Appendix \ref{ConsPropProof}]
\end{Proposition}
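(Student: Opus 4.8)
The plan is to exploit the fact that, under $H_{1}$, the normalized score $\etavec_{u}(\thetaveczero)$ of (\ref{TildeJ}) grows at rate $\sqrt{N}$ while the weighting matrix $\hat{\Gmat}_{u}^{-1}(\thetaveczero)$ of (\ref{hatG}) stays bounded away from the zero matrix, so that the quadratic form $T_{u}$ in (\ref{MTRAO}) diverges to $+\infty$ with probability (w.p.) $1$; convergence of $\mathrm{Pr}[T_{u}>t]$ to $1$ then follows immediately.

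First I would fix the true parameter $\thetavec\neq\thetaveczero$ (this is $H_{1}$) and write, from (\ref{TildeJ}),
\begin{equation*}
\etavec_{u}(\thetaveczero)=\sqrt{N}\,\bar{\svec}_{N},\qquad
\bar{\svec}_{N}\triangleq\frac{1}{N}\sum_{n=1}^{N}u(\Xmat_{n})\psivec_{u}(\Xmat_{n};\thetaveczero).
\end{equation*}
By A-\ref{A3} the vector $\psivec_{u}(\cdot;\thetaveczero)$ is well defined, and since it is the $\thetavec$-gradient of the complex circular Gaussian log-density $\log\phi^{(u)}$, it is, as a function of $\xvec$, at most quadratic in the entries of $\xvec$ and $\xvec^{*}$; hence $\|\psivec_{u}(\xvec;\thetaveczero)\|\le C\left(1+\|\xvec\|^{2}\right)$ for a constant $C=C(\thetaveczero)$ depending only on $\muvec^{(u)}_{\Xmatsc;\thetavecsc_{0}}$, $\bSigma^{(u)}_{\Xmatsc;\thetavecsc_{0}}$ and their first $\thetavec$-derivatives. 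Combining this bound with A-\ref{A4} and H{\"o}lder's inequality (e.g. ${\rm{E}}[u(\Xmat)\|\Xmat\|^{2};P_{\Xmatsc;\thetavecsc}]\le{\rm{E}}[u^{4}(\Xmat)\|\Xmat\|^{8};P_{\Xmatsc;\thetavecsc}]^{1/4}<\infty$, and likewise ${\rm{E}}[u^{2}(\Xmat)\|\Xmat\|^{4};P_{\Xmatsc;\thetavecsc}]\le{\rm{E}}[u^{4}(\Xmat)\|\Xmat\|^{8};P_{\Xmatsc;\thetavecsc}]^{1/2}<\infty$ and ${\rm{E}}[u^{2}(\Xmat);P_{\Xmatsc;\thetavecsc}]\le{\rm{E}}[u^{4}(\Xmat);P_{\Xmatsc;\thetavecsc}]^{1/2}<\infty$) shows that the i.i.d.\ summands defining $\bar{\svec}_{N}$ and those defining $\hat{\Gmat}_{u}(\thetaveczero)$ in (\ref{hatG}) have finite mean under $P_{\Xmatsc;\thetavecsc}$. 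The strong law of large numbers would then give, w.p.\ $1$,
\begin{equation*}
\bar{\svec}_{N}\longrightarrow\bmu_{u}\triangleq{\rm{E}}[u(\Xmat)\psivec_{u}(\Xmat;\thetaveczero);P_{\Xmatsc;\thetavecsc}],\qquad
\hat{\Gmat}_{u}(\thetaveczero)\longrightarrow\Gmat^{\star}_{u}\triangleq{\rm{E}}[u^{2}(\Xmat)\psivec_{u}(\Xmat;\thetaveczero)\psivec^{T}_{u}(\Xmat;\thetaveczero);P_{\Xmatsc;\thetavecsc}].
\end{equation*}
By A-\ref{A1}, $\bmu_{u}\neq\zerovec$; by A-\ref{A2}, $\Gmat^{\star}_{u}$ is non-singular, and since it has the Gram form ${\rm{E}}[(u\psivec_{u})(u\psivec_{u})^{T}]$ with real $\psivec_{u}$ it is positive semidefinite, hence positive definite.

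Then I would combine these limits using continuity of matrix inversion at a non-singular point: w.p.\ $1$, $\hat{\Gmat}_{u}(\thetaveczero)$ is positive definite for all large $N$ (so $T_{u}$ is well defined), $\hat{\Gmat}_{u}^{-1}(\thetaveczero)\to(\Gmat^{\star}_{u})^{-1}$, and $\lambda_{\min}\!\big(\hat{\Gmat}_{u}^{-1}(\thetaveczero)\big)\to 1/\lambda_{\max}(\Gmat^{\star}_{u})>0$. Consequently, w.p.\ $1$,
\begin{equation*}
T_{u}=\etavec^{T}_{u}(\thetaveczero)\hat{\Gmat}^{-1}_{u}(\thetaveczero)\etavec_{u}(\thetaveczero)
=N\,\bar{\svec}^{T}_{N}\hat{\Gmat}^{-1}_{u}(\thetaveczero)\bar{\svec}_{N}
\ \ge\ N\,\lambda_{\min}\!\big(\hat{\Gmat}^{-1}_{u}(\thetaveczero)\big)\,\|\bar{\svec}_{N}\|^{2},
\end{equation*}
and the last expression tends to $+\infty$ because $\|\bar{\svec}_{N}\|^{2}\to\|\bmu_{u}\|^{2}>0$ while $\lambda_{\min}(\hat{\Gmat}^{-1}_{u}(\thetaveczero))$ converges to a strictly positive constant. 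Hence $T_{u}\to+\infty$ w.p.\ $1$ under $H_{1}$, which implies $\mathrm{Pr}[T_{u}>t]\to1$ for every fixed $t\in\Rsp$, i.e.\ (\ref{InfLim}).

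The two applications of the strong law and the continuity of inversion are routine; the one place that needs care is the moment bookkeeping that ties the (at most quadratic) growth of $\psivec_{u}(\cdot;\thetaveczero)$ in $\xvec$ — a consequence of A-\ref{A3} and the Gaussian form of $\phi^{(u)}$ — to the exact exponents appearing in A-\ref{A4}, so that both empirical averages have finite mean, together with the observation that the non-singular Gram-type matrix in A-\ref{A2} is automatically positive definite. I expect this accounting to be the main, albeit mild, obstacle.
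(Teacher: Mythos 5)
Your proposal is correct and follows essentially the same route as the paper's proof: both establish the two strong-law limits $\frac{1}{\sqrt{N}}\etavec_{u}\left(\thetaveczero\right)\rightarrow{\rm{E}}\left[u\left(\Xmat\right)\psivec_{u}\left(\Xmat;\thetaveczero\right);\pxtheta\right]$ and $\hat{\Gmat}_{u}(\thetaveczero)\rightarrow\Gmat^{\star}_{u}$ via the polynomial growth bound on $\psivec_{u}$, condition (\ref{Cond}), A-\ref{A4} and H\"{o}lder's inequality, and then conclude $\frac{1}{N}T_{u}\rightarrow{C}>0$ w.p.\ 1 under $H_{1}$ using A-\ref{A1}, A-\ref{A2} and continuity (the paper cites the Mann--Wald theorem where you spell out the inverse-matrix continuity and eigenvalue lower bound explicitly). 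Your added observation that the non-singular Gram matrix in A-\ref{A2} is automatically positive definite is a correct and worthwhile clarification, but it does not change the argument.
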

Next, we derive the asymptotic distribution of the test-statistic under the null hypothesis and under a sequence of local alternatives.
\begin{Proposition}[Asymptotic distribution under the null hypothesis]
\label{AChiH0}  
Assume that conditions A-\ref{A3}$-$A-\ref{A5} are satisfied.
Then, 
\begin{equation}
\label{TAsDistH0}
T_{u}\xrightarrow[N\rightarrow{\infty}]{D}{\chi}^{2}_{m}\hspace{0.2cm}\text{under $H_{0}$},
\end{equation}
where ${\chi}^{2}_{m}$ denotes a central chi-squared distribution with $m$-degrees of freedom. 
[A proof appears in Appendix \ref{AChiH0Proof}]
\end{Proposition}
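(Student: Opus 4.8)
The plan is to split $T_{u}$ into the normalized score $\etavec_{u}\left(\thetaveczero\right)$ and the plug-in matrix $\hat{\Gmat}_{u}\left(\thetaveczero\right)$, to show that under $H_{0}$ the former is asymptotically $\mathcal{N}\left(\zerovec,\Gmat_{u}\left(\thetaveczero\right)\right)$ while the latter converges with probability one to $\Gmat_{u}\left(\thetaveczero\right)$, and then to read off the ${\chi}^{2}_{m}$ limit of the resulting Gaussian quadratic form via Slutsky's theorem and the continuous mapping theorem.

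First I would show that, under $H_{0}$, the normalized score is \emph{exactly} centered: ${\rm{E}}\left[u\left(\Xmat\right)\psivec_{u}\left(\Xmat;\thetaveczero\right);\pxtheta\right]=\zerovec$. By Proposition~\ref{Prop1}, $\varphi_{u}\left(\cdot;\thetaveczero\right)$ is the Radon--Nikodym derivative of $\qxtheta$ w.r.t.\ $\pxtheta$ at $\thetavec=\thetaveczero$, so, since $0<{\rm{E}}\left[u\left(\Xmat\right);\pxtheta\right]<\infty$ by (\ref{Cond}), one has ${\rm{E}}\left[u\left(\Xmat\right)\psivec_{u}\left(\Xmat;\thetaveczero\right);\pxtheta\right]={\rm{E}}\left[u\left(\Xmat\right);\pxtheta\right]\,{\rm{E}}\left[\psivec_{u}\left(\Xmat;\thetaveczero\right);\qxtheta\right]$. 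Because $\phi^{(u)}\left(\cdot;\bvartheta\right)$ is a circular Gaussian density, each component of $\psivec_{u}\left(\Xmat;\bvartheta\right)=\nabla_{\bvarthetasc}\log\phi^{(u)}\left(\Xmat;\bvartheta\right)$ is an affine function of the entries of $\Xmat$ and of $\Xmat\Xmat^{H}$, so its expectation under a probability measure is determined solely by the mean vector and covariance matrix of that measure; as $\qxtheta$ and the Gaussian $\Phi^{\left(u\right)}_{\Xmatsc;\thetaveczerosc}$ share $\muvec^{\left(u\right)}_{\Xmatsc;\thetaveczerosc}$ and $\bSigma^{\left(u\right)}_{\Xmatsc;\thetaveczerosc}$ by construction, ${\rm{E}}\left[\psivec_{u}\left(\Xmat;\thetaveczero\right);\qxtheta\right]={\rm{E}}\left[\psivec_{u}\left(\Xmat;\thetaveczero\right);\Phi^{\left(u\right)}_{\Xmatsc;\thetaveczerosc}\right]=\zerovec$, the last equality being the ordinary Gaussian score identity (the interchange of $\nabla_{\bvarthetasc}$ and integration is justified by A-\ref{A3}). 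Equivalently, this is the population counterpart of the stationarity of $\bar{J}_{u}\left(\thetavec,\bvartheta\right)$ in $\bvartheta$ at $\bvartheta=\thetavec$ noted in Section~\ref{MT_GQST}. In particular, by (\ref{GDef}), the covariance of $u\left(\Xmat\right)\psivec_{u}\left(\Xmat;\thetaveczero\right)$ under $\pxtheta$ equals exactly $\Gmat_{u}\left(\thetaveczero\right)$.

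Next I would invoke the i.i.d.\ structure of the sample. Under $H_{0}$ the vectors $u\left(\Xmat_{n}\right)\psivec_{u}\left(\Xmat_{n};\thetaveczero\right)$, $n=1,\dots,N$, are i.i.d.\ with mean $\zerovec$ and covariance $\Gmat_{u}\left(\thetaveczero\right)$, which is finite by A-\ref{A5} (and in any case by A-\ref{A4}, since by A-\ref{A3} each component of $\psivec_{u}\left(\Xmat;\thetaveczero\right)$ is $O\!\left(\left\|\Xmat\right\|^{2}\right)$ with finite coefficients, and Jensen's inequality gives ${\rm{E}}\left[u^{2}\left(\Xmat\right)\left\|\Xmat\right\|^{4};\pxtheta\right]\le\left({\rm{E}}\left[u^{4}\left(\Xmat\right)\left\|\Xmat\right\|^{8};\pxtheta\right]\right)^{1/2}<\infty$). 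The multivariate central limit theorem then yields $\etavec_{u}\left(\thetaveczero\right)=N^{-1/2}\sum_{n=1}^{N}u\left(\Xmat_{n}\right)\psivec_{u}\left(\Xmat_{n};\thetaveczero\right)\xrightarrow[N\rightarrow\infty]{D}\mathcal{N}\left(\zerovec,\Gmat_{u}\left(\thetaveczero\right)\right)$. In parallel, applying the strong law of large numbers to the i.i.d.\ integrable matrices $u^{2}\left(\Xmat_{n}\right)\psivec_{u}\left(\Xmat_{n};\thetaveczero\right)\psivec^{T}_{u}\left(\Xmat_{n};\thetaveczero\right)$ gives $\hat{\Gmat}_{u}\left(\thetaveczero\right)\xrightarrow[N\rightarrow\infty]{\textrm{w.p. 1}}\Gmat_{u}\left(\thetaveczero\right)$, and since $\Gmat_{u}\left(\thetaveczero\right)$ is non-singular by A-\ref{A5}, continuity of matrix inversion yields $\hat{\Gmat}^{-1}_{u}\left(\thetaveczero\right)\xrightarrow[N\rightarrow\infty]{\textrm{w.p. 1}}\Gmat^{-1}_{u}\left(\thetaveczero\right)$, hence also in probability.

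Finally, by Slutsky's theorem the pair $\left(\etavec_{u}\left(\thetaveczero\right),\hat{\Gmat}^{-1}_{u}\left(\thetaveczero\right)\right)$ converges in distribution to $\left(\zvec,\Gmat^{-1}_{u}\left(\thetaveczero\right)\right)$ with $\zvec\sim\mathcal{N}\left(\zerovec,\Gmat_{u}\left(\thetaveczero\right)\right)$, so the continuous mapping theorem applied to the continuous map $\left(\vvec,\Mmat\right)\mapsto\vvec^{T}\Mmat\vvec$ gives $T_{u}=\etavec^{T}_{u}\left(\thetaveczero\right)\hat{\Gmat}^{-1}_{u}\left(\thetaveczero\right)\etavec_{u}\left(\thetaveczero\right)\xrightarrow[N\rightarrow\infty]{D}\zvec^{T}\Gmat^{-1}_{u}\left(\thetaveczero\right)\zvec$. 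Writing $\zvec=\Gmat^{1/2}_{u}\left(\thetaveczero\right)\wvec$ with $\wvec\sim\mathcal{N}\left(\zerovec,\Imat_{m}\right)$ yields $\zvec^{T}\Gmat^{-1}_{u}\left(\thetaveczero\right)\zvec=\left\|\wvec\right\|^{2}\sim{\chi}^{2}_{m}$, which is (\ref{TAsDistH0}). The main obstacle is the first step: establishing that the normalized score is exactly centered under $H_{0}$ and that its asymptotic covariance is exactly $\Gmat_{u}\left(\thetaveczero\right)$, which is precisely what makes $T_{u}$ asymptotically pivotal (no unknown location or scale survives in the limit); the remaining steps are routine CLT/SLLN/Slutsky bookkeeping under conditions A-\ref{A3}--A-\ref{A5}.
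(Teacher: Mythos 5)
Your proposal is correct and follows essentially the same route as the paper's proof: asymptotic normality of $\etavec_{u}\left(\thetaveczero\right)$ with limit $\mathcal{N}\left(\zerovec,\Gmat_{u}\left(\thetaveczero\right)\right)$, almost-sure convergence of $\hat{\Gmat}_{u}\left(\thetaveczero\right)$ to the non-singular $\Gmat_{u}\left(\thetaveczero\right)$, and then Slutsky plus the continuous mapping theorem applied to the Gaussian quadratic form. The only difference is one of packaging: where the paper cites Lemma 5 of \cite{MTQMLFull} (and Identity \ref{EPsiZero}, i.e.\ Lemma 6 there) for the centering and the CLT, you supply the moment-matching argument for ${\rm{E}}\left[u\left(\Xmat\right)\psivec_{u}\left(\Xmat;\thetaveczero\right);\pxtheta\right]=\zerovec$ inline, which is a correct and self-contained substitute.
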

\begin{Theorem}[Asymptotic distribution under local alternatives]
\label{ALADTh}
Assume that conditions A-\ref{A3}$-$A-\ref{A7} are satisfied.
Furthermore, consider a sequence of local alternatives that converges to $\thetaveczero$ at a rate of $1/\sqrt{N}$. Specifically, consider
\begin{equation}
\label{LocAlt}
H_{1}: \thetavec=\thetaveczero+\hvec/\sqrt{N},
\end{equation}
where $\hvec\in\Rsp^{m}$ is a non-zero locality parameter. Then, 
\begin{equation}
\label{AsDistH1}
T_{u}\xrightarrow[N\rightarrow{\infty}]{D} {\chi}^{2}_{m}\left(\lambda_{u}(\hvec)\right)\hspace{0.2cm}\text{under $H_{1}$},
\end{equation}
where ${\chi}^{2}_{m}\left(\lambda_{u}(\hvec)\right)$ is a non-central chi-squared distribution with $m$-degrees of freedom and non-centrality parameter $\lambda_{u}(\hvec)\triangleq\hvec^{T}{\Rmat}^{-1}_{u}({\thetaveczero})\hvec$. The matrix $\Rmat_{u}(\cdot)$ is the asymptotic error-covariance (\ref{AMSE}) of the MT-GQMLE (\ref{PropEst}). [A proof appears in Appendix \ref{ALADThProof}]
\end{Theorem}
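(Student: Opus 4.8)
The plan is to build on the null-hypothesis analysis behind Proposition~\ref{AChiH0} and to extend it to the local alternative (\ref{LocAlt}) by a contiguity argument (Le~Cam's third lemma). Write the normalized score (\ref{TildeJ}) at $\thetaveczero$ as the i.i.d. sum $\etavec_{u}(\thetaveczero)=N^{-1/2}\sum_{n=1}^{N}u(\Xmat_{n})\psivec_{u}(\Xmat_{n};\thetaveczero)$. Since $\bar{J}_{u}(\thetavec,\bvartheta)$ has a stationary point at $\bvartheta=\thetavec$, one has ${\rm{E}}[u(\Xmat)\psivec_{u}(\Xmat;\thetavec);P_{\Xmatsc;\thetavecsc}]=\zerovec$ for every $\thetavec$ in a neighbourhood of $\thetaveczero$; in particular the summands are centred under $H_{0}$, and the proof of Proposition~\ref{AChiH0} already gives, under $H_{0}$, $\etavec_{u}(\thetaveczero)\xrightarrow{D}\mathcal{N}(\zerovec,\Gmat_{u}(\thetaveczero))$ and $\hat{\Gmat}_{u}(\thetaveczero)\xrightarrow{\textrm{w.p. 1}}\Gmat_{u}(\thetaveczero)$, the latter limit being invertible by A-\ref{A5}.

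\textbf{Step 1 (LAN and a joint CLT).} Under A-\ref{A6} and A-\ref{A7} the family $\{P_{\Xmatsc;\thetavecsc}\}$ is differentiable in quadratic mean at $\thetaveczero$, hence locally asymptotically normal: the log-likelihood ratio of $\prod_{n}p_{\thetaveczero+\hvec/\sqrt{N}}$ relative to $\prod_{n}p_{\thetaveczero}$ equals $\hvec^{T}\Svec_{N}-\tfrac12\hvec^{T}\IFIM(\thetaveczero)\hvec+o_{P}(1)$, where $\Svec_{N}\triangleq N^{-1/2}\sum_{n}\nabla_{\thetavecsc}\log p_{\thetaveczero}(\Xmat_{n})\xrightarrow{D}\mathcal{N}(\zerovec,\IFIM(\thetaveczero))$ under $H_{0}$; consequently (\ref{LocAlt}) is contiguous to $H_{0}$. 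Since $\etavec_{u}(\thetaveczero)$ and $\Svec_{N}$ are both normalized sums of i.i.d. centred vectors, the bivariate CLT (the required uniform integrability coming from the moment bounds A-\ref{A4}) yields, under $H_{0}$,
\begin{equation}
\label{jointCLT}
\bigl(\etavec_{u}(\thetaveczero),\,\Svec_{N}\bigr)\xrightarrow{D}\mathcal{N}\!\left(\zerovec,\left[\begin{array}{cc}\Gmat_{u}(\thetaveczero)&\Cmat\\ \Cmat^{T}&\IFIM(\thetaveczero)\end{array}\right]\right),
\end{equation}
with cross-covariance $\Cmat\triangleq{\rm{E}}[u(\Xmat)\psivec_{u}(\Xmat;\thetaveczero)\nabla^{T}_{\thetavecsc}\log p_{\thetaveczero}(\Xmat);\pxthetazero]$.

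\textbf{Step 2 (identify $\Cmat$ and conclude).} The algebraic core is the generalized Bartlett identity $\Cmat=\Fmat_{u}(\thetaveczero)$: differentiating ${\rm{E}}[u(\Xmat)\psivec_{u}(\Xmat;\thetavec);P_{\Xmatsc;\thetavecsc}]=\zerovec$ in $\thetavec$ at $\thetaveczero$ and interchanging differentiation with integration (licensed by A-\ref{A3} and A-\ref{A6}), the derivative splits into a term acting through the argument of $\psivec_{u}$, namely ${\rm{E}}[u(\Xmat)\nabla^{2}_{\thetavecsc}\log\phi^{(u)}(\Xmat;\thetaveczero);\pxthetazero]=-\Fmat_{u}(\thetaveczero)$ by (\ref{PsiDef})--(\ref{FDef}), plus a term acting through the measure, which is precisely $\Cmat$; their sum vanishes, so $\Cmat=\Fmat_{u}(\thetaveczero)$, which is symmetric as a Hessian expectation. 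Applying Le~Cam's third lemma to (\ref{jointCLT}), under the contiguous sequence (\ref{LocAlt}) we obtain $\etavec_{u}(\thetaveczero)\xrightarrow{D}\mathcal{N}(\Cmat\hvec,\Gmat_{u}(\thetaveczero))=\mathcal{N}(\Fmat_{u}(\thetaveczero)\hvec,\Gmat_{u}(\thetaveczero))$, while $\hat{\Gmat}_{u}(\thetaveczero)\xrightarrow{P}\Gmat_{u}(\thetaveczero)$ still holds under the alternative since contiguity preserves $o_{P}(1)$. By Slutsky and the continuous mapping theorem, $T_{u}=\etavec^{T}_{u}(\thetaveczero)\hat{\Gmat}^{-1}_{u}(\thetaveczero)\etavec_{u}(\thetaveczero)\xrightarrow{D}\bzeta^{T}\Gmat^{-1}_{u}(\thetaveczero)\bzeta$ with $\bzeta\sim\mathcal{N}(\Fmat_{u}(\thetaveczero)\hvec,\Gmat_{u}(\thetaveczero))$; since $\Gmat^{-1/2}_{u}(\thetaveczero)\bzeta\sim\mathcal{N}(\Gmat^{-1/2}_{u}(\thetaveczero)\Fmat_{u}(\thetaveczero)\hvec,\Imat)$, this quadratic form is ${\chi}^{2}_{m}(\lambda)$ with $\lambda=\hvec^{T}\Fmat_{u}(\thetaveczero)\Gmat^{-1}_{u}(\thetaveczero)\Fmat_{u}(\thetaveczero)\hvec=\hvec^{T}\Rmat^{-1}_{u}(\thetaveczero)\hvec$ by (\ref{AMSE}), i.e. (\ref{AsDistH1}).

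\textbf{Main obstacle.} The crux is the identification $\Cmat=\Fmat_{u}(\thetaveczero)$ and the differentiation-under-the-integral it rests on, together with verifying the LAN expansion of Step~1; this is where A-\ref{A3}, A-\ref{A6}, A-\ref{A7} are consumed, and one must also check that A-\ref{A4} supplies enough uniform integrability for the joint CLT (\ref{jointCLT}) and for $\hat{\Gmat}_{u}(\thetaveczero)\xrightarrow{P}\Gmat_{u}(\thetaveczero)$ to persist along the triangular array $\Xmat_{n}\sim p_{\thetaveczero+\hvec/\sqrt{N}}$. A route that bypasses Le~Cam is also available: expand ${\rm{E}}[\etavec_{u}(\thetaveczero);P_{\Xmatsc;\thetavecsc_{0}+\hvec/\sqrt{N}}]=\sqrt{N}\,{\rm{E}}[u(\Xmat)\psivec_{u}(\Xmat;\thetaveczero);P_{\Xmatsc;\thetavecsc_{0}+\hvec/\sqrt{N}}]\to\Cmat\hvec$ to first order in the parameter (again using $\Cmat=\Fmat_{u}(\thetaveczero)$), apply a triangular-array CLT to the centred sum with limiting covariance $\Gmat_{u}(\thetaveczero)$ by continuity (A-\ref{A3}, A-\ref{A5}, A-\ref{A6}), and finish as above.
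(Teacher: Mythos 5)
Your proof is correct in substance but follows a genuinely different route from the paper. The paper never leaves the alternative: it treats the observations as a triangular array with law $P_{\Xmatsc;\thetaveczerosc+\hvec/\sqrt{N}}$, applies the mean-value theorem to the score to write $\etavec_{u}(\thetaveczero)=\etavec_{u}(\thetavec)+\hat{\Fmat}_{u}(\thetavec^{*})\hvec$ with $\thetavec$ the true (drifting) parameter, establishes $\etavec_{u}(\thetavec)\xrightarrow{D}\mathcal{N}(\zerovec,\Gmat_{u}(\thetaveczero))$ by Lyapounov's CLT for triangular arrays, and proves $\hat{\Fmat}_{u}(\thetavec^{*})\xrightarrow{P}\Fmat_{u}(\thetaveczero)$ and $\hat{\Gmat}_{u}(\thetaveczero)\xrightarrow{P}\Gmat_{u}(\thetaveczero)$ by direct characteristic-function and uniform-integrability arguments (its Propositions 4--5 and Lemmas 1--2). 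You instead stay under $H_{0}$, prove a joint CLT for the score and the log-likelihood-ratio statistic, and invoke Le Cam's third lemma; the drift $\Fmat_{u}(\thetaveczero)\hvec$ then comes from the Bartlett-type identity $\Cmat=\Fmat_{u}(\thetaveczero)$ obtained by differentiating ${\rm{E}}[u(\Xmat)\psivec_{u}(\Xmat;\thetavec);\pxtheta]=\zerovec$ (which is indeed the paper's Identity 3), whereas the paper gets the same drift from the mean-value expansion. Your route is considerably shorter: contiguity transfers $\hat{\Gmat}_{u}(\thetaveczero)\xrightarrow{P}\Gmat_{u}(\thetaveczero)$ from the null to the alternative for free, replacing the paper's entire Proposition 5, and the final chi-squared computation $\lambda=\hvec^{T}\Fmat_{u}\Gmat_{u}^{-1}\Fmat_{u}\hvec=\hvec^{T}\Rmat_{u}^{-1}(\thetaveczero)\hvec$ matches (\ref{AMSE}) exactly. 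The price is regularity: your Step 1 asserts that A-\ref{A6} (continuity of the density) and A-\ref{A7} (bounded Fisher information) yield differentiability in quadratic mean and hence LAN, which does not follow from those conditions as literally stated --- DQM is a strictly stronger smoothness requirement. This is a real gap in your write-up, though a venial one in context, since the paper's own mean-value identities on $f(\xvec;\thetavec)$ likewise consume pointwise differentiability of the density that A-\ref{A6} does not literally supply; both arguments implicitly strengthen A-\ref{A6} to differentiability of the model in $\thetavec$. You should also note that the domination needed to differentiate ${\rm{E}}[u\psivec_{u};\pxtheta]$ under the integral sign is supplied by the bound $|[\psivec_{u}]_{k}|\leq B\sum_{l=0}^{2}\|\xvec\|^{l}$ together with A-\ref{A4} and A-\ref{A7}, rather than by A-\ref{A3} and A-\ref{A6} alone as you state.
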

The following Corollary is a direct consequence of (\ref{AsDistH1}), the Rayleigh-Ritz Theorem \cite{Horn} and the property that the right-tail probability
of the non-central chi-squared distribution is monotonically increasing in the non-centrality parameter \cite{Aubel}.
\begin{Corollary}[Asymptotic local power]
\label{APow}
Assume that conditions A-\ref{A3}$-$A-\ref{A7} hold. Under the local alternatives (\ref{LocAlt}), the asymptotic power at a fixed asymptotic size $\alpha$ satisfies
\begin{equation}
\label{ALocPow}
\beta^{(\alpha)}_{u}\left(\hvec\right)={Q}_{\chi^{2}_{m}\left(\lambda_{u}(\hvec)\right)}\left({Q}^{-1}_{\chi^{2}_{m}}\left(\alpha\right)\right),
\end{equation}
where ${Q}_{\chi^{2}_{m}}\left(\cdot\right)$ and ${Q}_{\chi^{2}_{m}\left(\cdot\right)}\left(\cdot\right)$ denote the right-tail probabilities of the central and non-central chi-squared distributions, respectively.
Furthermore, for any $c>0$ the worst-case asymptotic power
\begin{equation}
\label{WCAP} 
{\bar{\beta}}^{(\alpha)}_{u}(c)\triangleq\min\limits_{\hvec:\|\hvec\|\geq{c}}\beta^{(\alpha)}_{u}(\hvec)={Q}_{\chi^{2}_{m}\left(\gamma_{u}(c)\right)}\left({Q}^{-1}_{\chi^{2}_{m}}\left(\alpha\right)\right),
\end{equation}
where $\gamma_{u}(c)\triangleq{c}^{2}\left\|{\Rmat}_{u}({\thetaveczero})\right\|^{-1}_{S}$ and $\left\|\cdot\right\|_{S}$ denotes the spectral norm.
\end{Corollary}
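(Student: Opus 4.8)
The plan is to read off everything from the two limit theorems already established --- Proposition~\ref{AChiH0} under $H_0$ and Theorem~\ref{ALADTh} under the local alternatives~(\ref{LocAlt}) --- and then combine them with two elementary facts: the Rayleigh-Ritz characterization of the extreme eigenvalues of a real symmetric matrix, and the monotonicity of the right-tail probability $Q_{\chi^2_m(\lambda)}(\cdot)$ of the non-central chi-squared distribution in its non-centrality parameter $\lambda$.

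First I would fix the asymptotic size. By Proposition~\ref{AChiH0}, $T_u \xrightarrow{D}\chi^2_m$ under $H_0$, so taking the threshold in~(\ref{MTRAO}) to be $t = Q^{-1}_{\chi^2_m}(\alpha)$ makes the asymptotic false-alarm probability equal to $\alpha$. Next, under~(\ref{LocAlt}) Theorem~\ref{ALADTh} gives $T_u\xrightarrow{D}\chi^2_m(\lambda_u(\hvec))$ with $\lambda_u(\hvec)=\hvec^T\Rmat^{-1}_u(\thetaveczero)\hvec$; since the limiting law is continuous, $t$ is a continuity point of its cdf, hence $\lim_{N\to\infty}{\rm Pr}[T_u>t] = Q_{\chi^2_m(\lambda_u(\hvec))}(t) = Q_{\chi^2_m(\lambda_u(\hvec))}\big(Q^{-1}_{\chi^2_m}(\alpha)\big)$, which is~(\ref{ALocPow}).

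For the worst-case power~(\ref{WCAP}), the key observation is that $\lambda\mapsto Q_{\chi^2_m(\lambda)}(\cdot)$ is increasing, so $\min_{\|\hvec\|\geq c}\beta^{(\alpha)}_u(\hvec)$ is obtained by minimizing the non-centrality parameter $\lambda_u(\hvec)=\hvec^T\Rmat^{-1}_u(\thetaveczero)\hvec$ over $\{\hvec:\|\hvec\|\geq c\}$. Here I would note that $\Rmat_u(\thetaveczero)$, and hence $\Rmat^{-1}_u(\thetaveczero)$, is real, symmetric and positive definite: symmetry follows because $\Gmat_u(\thetavec)$ in~(\ref{GDef}) is a second-moment matrix and $\Fmat_u(\thetavec)$ is built from the Hessian~(\ref{GammaDef}), while positive definiteness follows from the non-singularity assumption A-\ref{A5} together with the sandwich form~(\ref{AMSE}). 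Consequently $\hvec^T\Rmat^{-1}_u(\thetaveczero)\hvec$ is strictly increasing along every ray from the origin, so its infimum over $\|\hvec\|\geq c$ is attained on the sphere $\|\hvec\|=c$; restricting to that sphere, the Rayleigh-Ritz Theorem gives $\min_{\|\hvec\|=c}\hvec^T\Rmat^{-1}_u(\thetaveczero)\hvec = c^2\lambda_{\min}\big(\Rmat^{-1}_u(\thetaveczero)\big) = c^2/\lambda_{\max}\big(\Rmat_u(\thetaveczero)\big) = c^2\|\Rmat_u(\thetaveczero)\|^{-1}_S$, the last step because the spectral norm of a symmetric positive definite matrix equals its largest eigenvalue. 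Feeding $\gamma_u(c)=c^2\|\Rmat_u(\thetaveczero)\|^{-1}_S$ back through the increasing map $\lambda\mapsto Q_{\chi^2_m(\lambda)}(Q^{-1}_{\chi^2_m}(\alpha))$ yields~(\ref{WCAP}).

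I do not anticipate a genuine obstacle; the only two points that need an explicit line of argument are (i) that $\Rmat_u(\thetaveczero)$ is symmetric positive definite, which is what licenses the simultaneous use of the spectral norm, the Rayleigh-Ritz bound, and the identity $\lambda_{\min}(\Rmat^{-1}_u)=1/\lambda_{\max}(\Rmat_u)$, and (ii) the radial monotonicity of the quadratic form, which lets the inequality constraint $\|\hvec\|\geq c$ be replaced by $\|\hvec\|=c$ before Rayleigh-Ritz is invoked. Everything else is direct substitution into the limiting distributions supplied by Proposition~\ref{AChiH0} and Theorem~\ref{ALADTh}.
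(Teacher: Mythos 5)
Your proposal is correct and follows exactly the route the paper intends: the paper gives no separate proof of this corollary, stating it as a direct consequence of (\ref{AsDistH1}), the Rayleigh--Ritz Theorem, and the monotonicity of the non-central chi-squared right-tail probability in the non-centrality parameter, which are precisely the three ingredients you combine. Your added details --- fixing the threshold via Proposition \ref{AChiH0}, the symmetry and positive definiteness of $\Rmat_{u}(\thetaveczero)$, and the reduction of the constraint $\|\hvec\|\geq c$ to the sphere $\|\hvec\|=c$ before applying Rayleigh--Ritz --- are all sound and merely make explicit what the paper leaves implicit.
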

\subsection{Selection of the MT-function}
\label{OptChoice}
While according to Propositions \ref{ConsProp} and \ref{AChiH0}, the asymptotic global power and size are invariant to the choice of the MT-function $u(\cdot)$, by Corollary \ref{APow} one sees that it controls the asymptotic local power through the error-covariance  ${\Rmat}_{u}(\thetaveczero)$ (\ref{AMSE}). Since the tail probability of the non-central chi-squared distribution is monotonically increasing in the non-centrality parameter \cite{Aubel}, minimization of the spectral norm ${\|{\Rmat}_{u}(\thetaveczero)\|}_{S}$ amounts to maximization of the worst case asymptotic local power ${\bar{\beta}}^{(\alpha)}_{u}(c)$ (\ref{WCAP}) for any fixed $c$ and asymptotic size $\alpha$. Hence, we propose to choose $u(\cdot)$ that minimizes ${\|\hat{\Rmat}_{u}(\thetaveczero)\|}_{S}$, where $\hat{\Rmat}_{u}(\thetavec)$ is an empirical estimate of error-covariance (\ref{AMSE}) defined as:
\begin{equation}
\label{EMSE}  
\hat{\Rmat}_{u}(\thetavec)\triangleq{{\hat{\Fmat}}^{-1}_{u}(\thetavec)\hat{\Gmat}_{u}(\thetavec){\hat{\Fmat}}^{-1}_{u}(\thetavec)},
\end{equation}
where
$
\hat{\Fmat}_{u}\left(\thetavec\right)\triangleq-{N}^{-1}\sum_{n=1}^{N}u\left(\Xmat_{n}\right)\Gammamat_{u}\left(\Xmat_{n};\thetavec\right)
$
is an estimate of (\ref{FDef}) and $\hat{\Gmat}_{u}\left(\thetavec\right)$ is defined in (\ref{hatG}). It can be shown that if conditions A-\ref{A3}, A-\ref{A4}, A-\ref{A5}, A-\ref{A7}, are satisfied then $\hat{\Rmat}_{u}(\thetaveczero)\xrightarrow[N\rightarrow{\infty}]{P}{\Rmat}_{u}(\thetaveczero)$ under the local alternatives (\ref{LocAlt}).

Here, we restrict the class of MT-functions to some parametric family $\left\{u\left(\Xmat;\bomega\right), \bomega\in\bOmega\subseteq\Csp^{r}\right\}$ that satisfies the conditions stated in Definition \ref{Def1} and Theorem \ref{ALADTh}. For example, the Gaussian family of functions that satisfy the robustness conditions stated at the ending paragraph of Subsection \ref{EmpMT} is a natural choice for inducing outlier resilience. Hence, an optimal choice of the MT-function parameter $\bomega$ would minimize ${\|\hat{\Rmat}_{u}(\thetaveczero)\|}_{S}$ that is constructed from (\ref{EMSE}) by the same sequence of samples used for obtaining the MT-GQST (\ref{MTRAO}).
\section{Numerical example}
\label{Examp}
We consider the problem of detecting a mismatch in the location of an emitting narrowband near-field point source that is formulated as the following composite binary hypothesis testing problem:
\begin{eqnarray}
\label{CompHypMis}
H_{0}&:&
\Xmat_{n}=S_{n}\avec\left(\thetaveczero\right)+\Wmat_{n},\hspace{0.2cm}n=1,\ldots,N,
\\\nonumber
H_{1}&:&
\Xmat_{n}=S_{n}\avec\left(\thetavec\right)+\Wmat_{n},\hspace{0.2cm}\thetavec\neq\thetaveczero,\hspace{0.2cm}n=1,\ldots,N,
\end{eqnarray}
where $\{\Xmat_{n}\in\Csp^{p}\}$ is an observation process, $\{S_{n}\in\Csp\}$ is an i.i.d. symmetrically distributed random signal process, and $\{\Wmat_{n}\in\Csp^{p}\}$ is an i.i.d noise process that is statistically independent of $\{S_{n}\}$. We assume that the noise component is spherically contoured \cite{Visa} with stochastic representation 
$\Wmat_{n}=\nu_{n}\Zmat_{n},$
where $\{\nu_{n}\in\Rsp_{++}\}$ is an i.i.d. process and $\{\Zmat_{n}\in\Csp^{p}\}$ is a proper-complex wide-sense stationary Gaussian process with zero-mean and scaled unit covariance $\sigma^{2}_{\Zmatsc}\Imat$. The processes $\{\nu_{n}\}$ and $\{\Zmat_{n}\}$ are assumed to be statistically independent. The vector $\avec\left(\thetavec\right)$, $\thetavec\triangleq\left[r,\vartheta\right]^{T}$, is the steering vector of a uniform linear array of $p$ sensors with inter-element spacing $d$ that receive a signal with wavelength $\lambda$ generated by a narrowband near-field point source with range $r$ and bearing $\vartheta$. By Fresnel's approximation \cite{Fresnel}, \cite{Fresnel2} when $0.62(d^{3}{(p-1)^{3}}/{\lambda})^{1/2}<r<2d^{2}{{(p-1)^{2}}/{\lambda}}$ we have $\left[\avec\left(\thetavec\right)\right]_{k}=\exp\left(j(\omega_{\rm{e}}{k}+\phi_{\rm{e}}{k}^{2} + O(d^{2}/r^{2}))\right)$, $k=0,\ldots,p-1$, where $\omega_{\rm{e}}\triangleq-{2\pi{d}\sin\left(\vartheta\right)}/{\lambda}$ and $\phi_{\rm{e}}\triangleq{\pi{d}^{2}\cos^{2}(\vartheta)}/({\lambda{r}})$ are called ``electrical angles''. 

In order to gain robustness against outliers, as well as sensitivity to higher-order moments, we specify the MT-function in the zero-centred Gaussian family of functions parametrized by a width parameter $\omega$, i.e.,
\begin{equation}
\label{GaussMTFunc}
u\left(\xvec;\omega\right)=\exp\left(-{\left\|\xvec\right\|^{2}}/{\omega^{2}}\right),\hspace{0.2cm\omega\in\Rsp_{++}}.
\end{equation}
Notice that the MT-function (\ref{GaussMTFunc}) satisfies the robustness conditions stated at the ending paragraph of Subsection \ref{EmpMT}. To obtain the corresponding test-statistic (\ref{MTRAO}) and the empirical error-covariance (\ref{EMSE}), one has to compute the vector and matrix functions $\psivec_{u}\left(\cdot;\thetavec\right)$ and $\bGamma_{u}\left(\cdot;\thetavec\right)$, defined in (\ref{PsiDef})  and  (\ref{GammaDef}), respectively. By (\ref{MTMean}), (\ref{MTCovZ}), (\ref{PsiDef}), (\ref{GammaDef}), (\ref{CompHypMis}) and (\ref{GaussMTFunc}) these quantities take the following simple forms:
$$
\psivec_{u}\left(\Xmat;\thetavec\right)=\xi\left(\omega\right)\nabla_{\thetavecsc}|\Xmat^{H}\avec\left(\thetavec\right)|^{2},
$$
$$
\Gammamat_{u}\left(\Xmat;\thetavec\right)=\xi\left(\omega\right)\left[{\begin{array}{*{20}c} 
\nabla^{2}_{r}\left|\Xmat^{H}\avec\left(\thetavec\right)\right|^{2} & \nabla^{2}_{r\vartheta}\left|\Xmat^{H}\avec\left(\thetavec\right)\right|^{2}
\\
\nabla^{2}_{r\vartheta}\left|\Xmat^{H}\avec\left(\thetavec\right)\right|^{2} & \nabla^{2}_{\vartheta}\left|\Xmat^{H}\avec\left(\thetavec\right)\right|^{2} \end{array}}\right],
$$
where $\xi\left(\omega\right)$ is a strictly positive functions of $\omega$. It is important to note that the resulting test-statistic (\ref{MTRAO}) and the empirical error-covariance (\ref{EMSE}) are \textit{independent} of $\xi\left(\omega\right)$. 

In the following simulation we evaluate the detection performance of the proposed MT-GQST as compared to the score test, the GQST, and another robust GQST extension, called here ZMNL-GQST. The ZMNL-GQST operates by applying GQST after passing the data through a zero-memory non-linear (ZMNL) function that suppresses outliers by clipping the amplitude of the observations. We use the same ZMNL preprocessing approach that has been applied in \cite{swami1997}-\cite{swami2002} to robustify the MUSIC algorithm \cite{Viberg} . 

We considered a BPSK signal with variance $\sigma^{2}_{S}$ impinging on $p=8$ sensors with inter-element spacing $d=\lambda/4=0.25$ [m]. Two types of noise distributions were examined: 1) Gaussian and 2) heavy-tailed $K$-distributed noise \cite{Visa} with shape parameter $\kappa=0.75$. The sample size was set to $N=1000$. The signal-to-noise-ratio (SNR), used to index the detection performance, is defined as ${\rm{SNR}}\triangleq10\log_{10}{\sigma^{2}_{S}}/{\sigma^{2}_{\Zmatsc}}$. The location vector parameter at the null was set to $\thetavec_{0}=[r_{0},\vartheta_{0}]^{T}$, where $r_{0}=1.5$ [m] and $\vartheta_{0}=0^{\circ}$. We considered a specific local alternative $\thetavec_{1}=[r_{1},\vartheta_{1}]^{T}$, corresponding to $\hvec=\sqrt{N}\left(\thetavec_{1}-\thetavec_{0}\right)$ in (\ref{LocAlt}), where $r_{1}=r_0 + 0.01$ [m] and $\vartheta_{1}=\vartheta_{0} + 0.5^{\circ}$. The optimal width parameter $\omega_{\rm{opt}}$ of Gaussian MT-function (\ref{GaussMTFunc}) was obtained by minimizing the spectral norm ${\|\hat{\Rmat}_{u}(\thetaveczero)\|}_{S}$ of the empirical error-covariance (\ref{EMSE}) over $\Omega=[1, 30]$. All empirical power curves were obtained using $10^{4}$ Monte-Carlo simulations. 

Fig. \ref{Fig1} depicts the empirical and asymptotic (\ref{ALocPow}) power curves of the MT-GQST as compared to the empirical power curves of the GQST, ZMNL-GQST and the score test for a fixed test size equal to $10^{-2}$. Notice that when the noise is Gaussian, the MT-GQST, GQST and ZMNL-GQST attain similar performance. For the $K$-distributed noise, the MT-GQST outperforms the GQST and ZMNL-GQST, and significantly reduces the gap towards the score test, which unlike the MT-GQST, assumes knowledge of the likelihood function. 
 \begin{figure}[htbp!]  
  \begin{center}
    {{\subfigure{\label{Fig1c}\includegraphics[width=8cm, height=3.821cm]{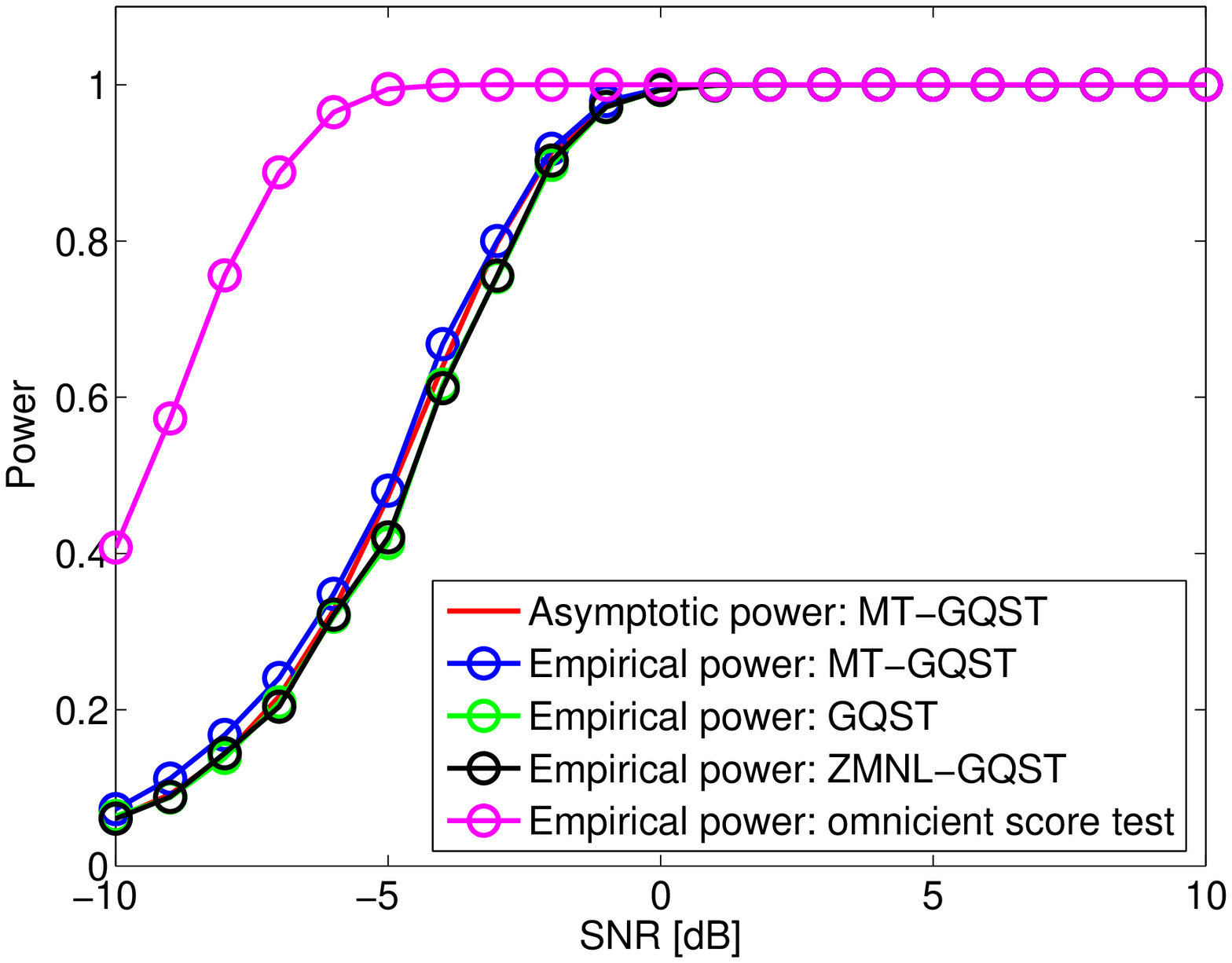}}}}
    {{\subfigure{\label{Fig1c}\includegraphics[width=8cm, height=3.821cm]{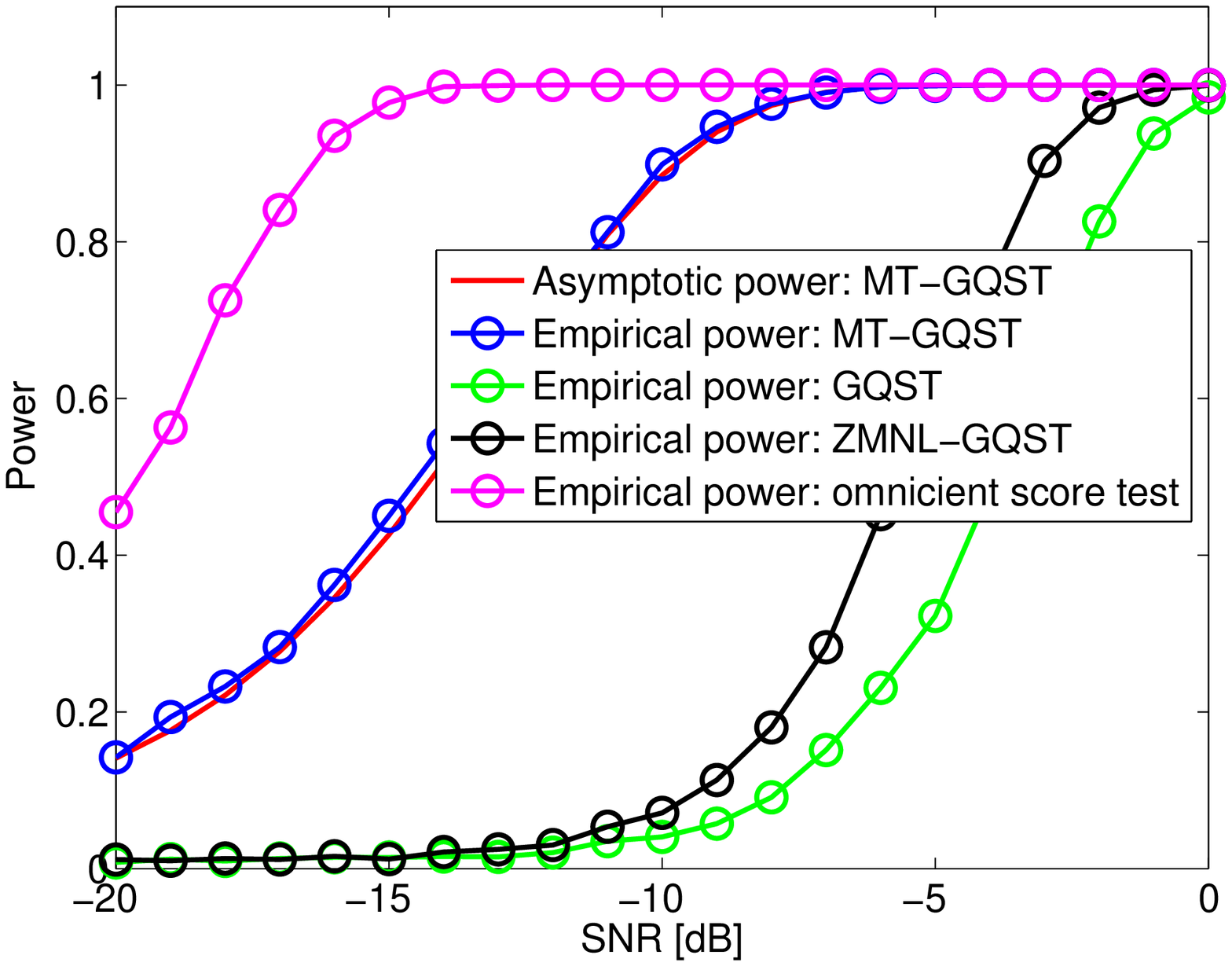}}}}
      \end{center}  
  \caption{Location mismatch detection in Gaussian noise \textbf{(top)}  and $K$-distributed noise \textbf{(bottom)}.}   
 \label{Fig1}
\end{figure}
\section{Conclusion}
\label{Conclusion}
In this paper a new score-type test, called MT-GQST, was derived based on the score-function of the measure transformed GQMLE. By specifying the MT-function in the Gaussian family, the proposed test was applied to location mismatch detection in non-Gaussian noise. Exploration of other MT-functions may result in additional tests in this class that have different useful properties. 

\appendix
\section*{Appendices}
\addcontentsline{toc}{section}{Appendices}
\renewcommand{\thesubsection}{\Alph{subsection}}

\subsection{Proof of Proposition \ref{ConsProp}:}
\label{ConsPropProof}
We first show that ${\etavec}_{u}\left(\thetaveczero\right)$ and $\hat{\Gmat}_{u}(\thetaveczero)$ satisfy
\begin{equation}
\label{ScoreLimit}
\frac{1}{\sqrt{N}}{\etavec}_{u}\left(\thetaveczero\right)\xrightarrow[N\rightarrow{\infty}]{\textrm{w.p. 1}}{\rm{E}}\left[u\left(\Xmat\right)\psivec_{u}\left(\Xmat;\thetaveczero\right);\pxtheta\right]
\end{equation}
and 
\begin{equation}
\label{CovLimit}
\hat{\Gmat}_{u}(\thetaveczero)\xrightarrow[N\rightarrow{\infty}]{\textrm{w.p. 1}}{\rm{E}}[u^2\left(\Xmat\right)\psivec_{u}\left(\Xmat;\thetaveczero\right)\psivec^{T}_{u}\left(\Xmat;\thetaveczero\right);P_{\Xmatsc;\thetavecsc}].
\end{equation}
Since $\left\{\Xmat_{n}\right\}^{N}_{n=1}$ are i.i.d. random variables and the functions $u\left(\cdot\right)$ and $\psivec_{u}\left(\cdot,\cdot\right)$ are real, the products $\left\{u\left(\Xmat_{n}\right)\psivec_{u}\left(\Xmat_{n},\thetaveczero\right)\right\}_{n=1}^{N}$ and $\{u\left(\Xmat_{n}\right)\psivec_{u}\left(\Xmat_{n},\thetaveczero\right)\psivec^{T}_{u}\left(\Xmat_{n},\thetaveczero\right)\}^{N}_{n=1}$, comprising  (\ref{TildeJ}) and (\ref{hatG}), respectively, are real and i.i.d. 
Furthermore, by (\ref{PsiDef}), (\ref{PsiBound}) and assumption A-\ref{A3} it can be shown that there exists a positive constant $B$ such that $\left|\left[\psivec_{u}\left(\Xmat,\thetavec\right)\right]_{k}\right|\leq{B}\sum_{l=0}^{2}\left\|\Xmat\right\|^{l}$, $k=1,\ldots,p$ and $|[\psivec_{u}\left(\Xmat;\thetavec\right)\psivec^{T}_{u}\left(\Xmat;\thetavec\right)]_{k,j}|\leq{B}\sum_{l=0}^{4}\left\|\Xmat\right\|^{l}$, ${k,j=1,\ldots,p}$. Hence, according to (\ref{Cond}), Assumption A-\ref{A4} and H\"{o}lder's inequality \cite{MeasureTheory} the expectations 
$\{{\rm{E}}[u\left(\Xmat\right)\left|\left[\psivec_{u}\left(\Xmat,\thetaveczero\right)\right]_{k}\right|;\pxtheta]\}^{p}_{k=1}$ and $\{{\rm{E}}[u^{2}\left(\Xmat\right)|[\psivec_{u}\left(\Xmat;\thetaveczero\right)\psivec^{T}_{u}\left(\Xmat;\thetaveczero\right)]_{k,j}|;\pxtheta]\}^{p}_{k,j=1}$ are finite. Therefore, by (\ref{TildeJ}), (\ref{hatG}) and Khinchine's strong law of large numbers \cite{Folland} relations (\ref{ScoreLimit}) and (\ref{CovLimit}) must hold.

Thus, by (\ref{MTRAO}), (\ref{ScoreLimit}), (\ref{CovLimit}), Assumptions A-\ref{A1}, A-\ref{A2} and the Mann-Wald theorem \cite{MannWald} we conclude that 
$\frac{1}{N}T_{u}\xrightarrow[N\rightarrow{\infty}]{\textrm{w.p. 1}}C$ under $H_{1}$, where $C$ denotes some positive constant. The relation (\ref{InfLim}) directly follows.\qed
\subsection{Proof of Proposition \ref{AChiH0}:}
\label{AChiH0Proof}
Under assumptions A-\ref{A3} and A-\ref{A4} it is shown in Lemma 5 stated in Appendix B in \cite{MTQMLFull} that the normalized score function (\ref{TildeJ}) satisfies
\begin{equation}
\label{EtaLimH0}
{\etavec}_{u}\left(\thetaveczero\right)\xrightarrow[N\rightarrow{\infty}]{D}\mathcal{N}\left(\zerovec,\Gmat_{u}\left(\thetaveczero\right)\right).
\end{equation}
Furthermore, by assumptions A-\ref{A3} and A-\ref{A4} it also follows from relation (\ref{CovLimit}) in Appendix \ref{ConsPropProof} that under $H_{0}$ ($\pxtheta=\pxthetazero$)
\begin{equation}
\label{GhatLimH0}
\hat{\Gmat}_{u}\left(\thetaveczero\right)\xrightarrow[N\rightarrow{\infty}]{\textrm{w.p. 1}}\Gmat_{u}\left(\thetaveczero\right),
\end{equation}
where according to assumption A-\ref{A5} $\Gmat_{u}\left(\thetaveczero\right)$ is non-singular. 
Hence, relation (\ref{TAsDistH0}) follows from  (\ref{EtaLimH0}), (\ref{GhatLimH0}), Slutskey's Theorem \cite{MeasureTheory}, Mann-Wald's Theorem \cite{MannWald} and the properties of quadratic forms of Gaussian random variables \cite{Kay}. \qed

\subsection{Proof of Theorem \ref{ALADTh}:}
\label{ALADThProof}
In Propositions \ref{AnorLocLem} and \ref{WConGLoc} stated below we show that $\etavec_{u}\left(\thetaveczero\right)\xrightarrow[N\rightarrow{\infty}]{D}\mathcal{N}\left(\Fmat_{u}\left(\thetaveczero\right)\hvec,\Gmat_{u}\left(\thetaveczero\right)\right)$ and $\hat{\Gmat}_{u}\left(\thetaveczero\right)\xrightarrow[N\rightarrow{\infty}]{P}\Gmat_{u}\left(\thetaveczero\right)$. Hence, by (\ref{MTRAO})
relation (\ref{AsDistH1}) follows from Slutskey's Theorem \cite{MeasureTheory}, Mann-Wald's Theorem \cite{MannWald} and the properties of quadratic forms of Gaussian random variables \cite{Kay}.\qed

The following lemmas are based on the fact that since by (\ref{LocAlt}) the parameter $\thetavec$ changes with the sample size $N$, the observations form a triangular array \cite{Serfling} (rather than a sequence) of random vectors:
\begin{equation}
\label{TriangleX}
\Xmat_{N,k},\hspace{0.2cm}k=1,\ldots,N\hspace{0.2cm},N\geq1,
\end{equation}
where $\Xmat_{N,k}$, $k=1,\ldots,N$ are i.i.d. with probability distribution $P_{\Xmatsc;\thetaveczerosc+\frac{\hvec}{\sqrt{N}}}$.
\begin{Proposition}
\label{AnorLocLem}
Assume that conditions  A-\ref{A3}-A-\ref{A7} are satisfied. Under the local alternatives (\ref{LocAlt}), the normalized score function $\etavec_{u}\left(\thetaveczero\right)$ (\ref{TildeJ}) satisfies:
\begin{equation}
\label{ASNormEtaLoc}
\etavec_{u}\left(\thetaveczero\right)\xrightarrow[N\rightarrow{\infty}]{D}\mathcal{N}\left(\Fmat_{u}\left(\thetaveczero\right)\hvec,\Gmat_{u}\left(\thetaveczero\right)\right).
\end{equation}
\end{Proposition}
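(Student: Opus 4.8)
The plan is to establish (\ref{ASNormEtaLoc}) by a central limit theorem for triangular arrays. Under the local alternatives (\ref{LocAlt}) the data form the array (\ref{TriangleX}), so by (\ref{TildeJ}) one may write $\etavec_{u}\left(\thetaveczero\right)=(1/\sqrt{N})\sum_{n=1}^{N}\Yvec_{N,n}$ with $\Yvec_{N,n}\triangleq u\left(\Xmat_{N,n}\right)\psivec_{u}\left(\Xmat_{N,n};\thetaveczero\right)$, which for each fixed $N$ are i.i.d.\ with common law induced by $P_{\Xmatsc;\thetaveczerosc+\hvec/\sqrt{N}}$. Put $\muvec_{N}\triangleq{\rm{E}}\left[\Yvec_{N,n};P_{\Xmatsc;\thetaveczerosc+\hvec/\sqrt{N}}\right]$. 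I will show: (i) $\sqrt{N}\muvec_{N}\rightarrow\Fmat_{u}\left(\thetaveczero\right)\hvec$; (ii) the common covariance of $\Yvec_{N,n}$ converges to $\Gmat_{u}\left(\thetaveczero\right)$ and a Lyapunov condition holds, so that the multivariate Lindeberg--Feller theorem \cite{Serfling} gives $\etavec_{u}\left(\thetaveczero\right)-\sqrt{N}\muvec_{N}\xrightarrow[N\rightarrow{\infty}]{D}\mathcal{N}\left(\zerovec,\Gmat_{u}\left(\thetaveczero\right)\right)$; and (iii) conclude (\ref{ASNormEtaLoc}) via Slutsky's theorem \cite{MeasureTheory}.

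Step (i) is where the MT-GQMLE enters. The stationary-point property recorded above (\ref{TildeJ}), equivalently the fact that the population quasi-score has zero mean at the truth because it sees the data only through the matched first and second MT-moments, yields the self-consistency identity ${\rm{E}}\left[u\left(\Xmat\right)\psivec_{u}\left(\Xmat;\thetavec\right);P_{\Xmatsc;\thetavecsc}\right]=\zerovec$ for every $\thetavec\in\Thetasp$. Letting $f\left(\xvec;\thetavec\right)$ denote the density of $\pxtheta$, differentiating this identity in $\thetavec$ under the integral sign (legitimate by A-\ref{A3}, A-\ref{A4}, A-\ref{A6} and A-\ref{A7}) and using (\ref{FDef}) gives the generalized information equality
\begin{equation}
\label{GenInfoEq}
{\rm{E}}\left[u\left(\Xmat\right)\psivec_{u}\left(\Xmat;\thetavec\right)\left(\nabla_{\thetavecsc}\log f\left(\Xmat;\thetavec\right)\right)^{T};P_{\Xmatsc;\thetavecsc}\right]=\Fmat_{u}\left(\thetavec\right),\quad\thetavec\in\Thetasp.
\end{equation}
Since $\muvec_{N}=\int_{\XCalsc}u\left(\xvec\right)\psivec_{u}\left(\xvec;\thetaveczero\right)f\left(\xvec;\thetaveczero+\hvec/\sqrt{N}\right)d\xvec$, expanding only the density around $\thetaveczero$ makes the zeroth-order term vanish by self-consistency at $\thetaveczero$ and makes the first-order term equal $N^{-1/2}\Fmat_{u}\left(\thetaveczero\right)\hvec$ by (\ref{GenInfoEq}); the remainder is $o\left(N^{-1/2}\right)$ by a Cauchy--Schwarz bound combining boundedness of the Fisher information (A-\ref{A7}) with A-\ref{A4} and the polynomial bound $\left|\left[\psivec_{u}\left(\Xmat,\thetavec\right)\right]_{k}\right|\leq{B}\sum_{l=0}^{2}\left\|\Xmat\right\|^{l}$ established in Appendix \ref{ConsPropProof}. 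Alternatively, the same mean shift follows from Le Cam's third lemma applied under the null to the asymptotically jointly Gaussian pair formed by $\etavec_{u}\left(\thetaveczero\right)$ and the log-likelihood ratio of $P_{\Xmatsc;\thetaveczerosc+\hvec/\sqrt{N}}$ against $P_{\Xmatsc;\thetaveczerosc}$, whose asymptotic cross-covariance is exactly $\Fmat_{u}\left(\thetaveczero\right)\hvec$ by (\ref{GenInfoEq}).

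For step (ii) the common covariance of $\Yvec_{N,n}$ equals ${\rm{E}}[u^{2}\left(\Xmat\right)\psivec_{u}\left(\Xmat;\thetaveczero\right)\psivec^{T}_{u}\left(\Xmat;\thetaveczero\right);P_{\Xmatsc;\thetaveczerosc+\hvec/\sqrt{N}}]-\muvec_{N}\muvec_{N}^{T}$, which tends to $\Gmat_{u}\left(\thetaveczero\right)$ because A-\ref{A6} gives convergence in total variation along the sequence (Scheff\'{e}'s lemma) while A-\ref{A4} and the polynomial bound on $\psivec_{u}$ supply the uniform integrability needed to pass the expectation to the limit, and because $\muvec_{N}\muvec_{N}^{T}=O\left(N^{-1}\right)\rightarrow\zerovec$ by step (i). The Lyapunov condition with exponent $2$ holds since $\left\|\Yvec_{N,n}\right\|^{4}\leq{B'}u^{4}\left(\Xmat_{N,n}\right)\sum_{l=0}^{8}\left\|\Xmat_{N,n}\right\|^{l}$ has expectation bounded uniformly over the compact $\Thetasp$ by A-\ref{A4}, whence $N^{-2}\sum_{n=1}^{N}{\rm{E}}\left\|\Yvec_{N,n}-\muvec_{N}\right\|^{4}\rightarrow0$. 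Reducing to the scalar case by the Cram\'{e}r--Wold device and applying Lindeberg--Feller then yields the centred limit in (ii), and composing it with step (i) through Slutsky's theorem gives (\ref{ASNormEtaLoc}).

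I expect step (i) to be the main obstacle: making the interchange of differentiation and integration in (\ref{GenInfoEq}) and the Taylor expansion of $\muvec_{N}$ rigorous, with a genuinely $o\left(N^{-1/2}\right)$ remainder, amounts to establishing differentiability in quadratic mean of $\thetavec\mapsto\sqrt{f\left(\cdot;\thetavec\right)}$ out of A-\ref{A6} and A-\ref{A7} and then controlling the resulting remainder against $u\psivec_{u}\left(\cdot;\thetaveczero\right)$ via A-\ref{A3}--A-\ref{A4}; the moment bookkeeping in step (ii) is routine by comparison.
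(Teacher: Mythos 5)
Your proposal is correct in substance but takes a genuinely different route from the paper's. The paper never centers the score at its exact mean under the local alternative; instead it applies the mean-value theorem to $\thetavec\mapsto\psivec_{u}\left(\xvec;\thetavec\right)$ (Identity \ref{MVTpsiID}) to write $\etavec_{u}\left(\thetaveczero\right)=\etavec_{u}\left(\thetavec\right)+\hat{\Fmat}_{u}\left(\thetavec^{*}\right)\hvec$ with $\thetavec=\thetaveczero+\hvec/\sqrt{N}$ and $\thetavec^{*}$ intermediate, and then proves separately that $\etavec_{u}\left(\thetavec\right)\xrightarrow{D}\mathcal{N}\left(\zerovec,\Gmat_{u}\left(\thetaveczero\right)\right)$ (Lemma \ref{AGEtathetaLoc}, a Lyapunov CLT for the triangular array that exploits the self-consistency Identity \ref{EPsiZero} at the \emph{drifting true} parameter) and that $\hat{\Fmat}_{u}\left(\thetavec^{*}\right)\xrightarrow{P}\Fmat_{u}\left(\thetaveczero\right)$ (Lemma \ref{WConFthetaStar}), finishing with Slutsky. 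The mean shift thus appears as a stochastic term $\hat{\Fmat}_{u}\left(\thetavec^{*}\right)\hvec$ rather than as the limit of $\sqrt{N}\muvec_{N}$, which sidesteps your main obstacle: the paper never differentiates ${\rm{E}}\left[u\left(\Xmat\right)\psivec_{u}\left(\Xmat;\thetavec\right);\pxtheta\right]=\zerovec$ under the integral sign, since the only density expansion it needs (Identity \ref{MVTfID}) enters linearly inside expectations controlled by dominated convergence and H\"{o}lder bounds against the Fisher information (A-\ref{A7}). Your generalized information equality is nevertheless correct --- it is exactly what that differentiation yields, the first term being $-\Fmat_{u}\left(\thetavec\right)$ by (\ref{FDef}) and (\ref{GammaDef}) --- and your Le Cam third lemma alternative is the classical contiguity formulation of the same computation. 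What your route buys is a cleaner conceptual decomposition (one centered CLT plus a deterministic mean calculation, with the Bartlett-type identity made explicit); what it costs is precisely the quadratic-mean-differentiability and $o\left(N^{-1/2}\right)$ remainder bookkeeping you flag, plus one item the paper's Lemma \ref{AGEtathetaLoc} does not need: convergence of the second moment of $u\left(\Xmat\right)\psivec_{u}\left(\Xmat;\thetaveczero\right)$ under the drifting law to $\Gmat_{u}\left(\thetaveczero\right)$ --- though your Scheff\'{e}-plus-uniform-integrability argument for that is essentially the same device the paper deploys in Proposition \ref{WConGLoc} for $\hat{\Gmat}_{u}$.
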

\begin{proof}
By assumption A-\ref{A3}, the vector function $\psivec_{u}\left(\xvec;\thetavec\right)$ defined in (\ref{PsiDef}) is continuous in $\Thetasp$ for any $\xvec\in\XCal$. Therefore, by Identity \ref{MVTpsiID}, the normalized score function (\ref{TildeJ}) satisfies $\etavec_{u}\left(\thetaveczero\right)=\etavec_{u}\left(\thetavec\right) + \hat{\Fmat}_{u}\left(\thetavec^{*}\right)\hvec$, where $\hat{\Fmat}_{u}\left(\cdot\right)$ is defined below (\ref{EMSE}) and $\thetavec^{*}$ lies in the line segment connecting $\thetavec$ and $\thetaveczero$. Furthermore, by Lemmas \ref{AGEtathetaLoc} and \ref{WConFthetaStar} stated below in Appendix \ref{AuxLem}, $\etavec_{u}\left(\thetavec\right)\xrightarrow[N\rightarrow{\infty}]{D}\mathcal{N}\left(\zerovec,\Gmat_{u}\left(\thetaveczero\right)\right)$
and $\hat{\Fmat}_{u}\left(\thetavec^{*}\right)\xrightarrow[N\rightarrow{\infty}]{P}\Fmat_{u}\left(\thetaveczero\right)$. Therefore, the relation (\ref{ASNormEtaLoc}) follows directly from Slutskey's Theorem \cite{MeasureTheory}.
\end{proof}
\begin{Proposition}
\label{WConGLoc}
Assume that conditions A-\ref{A3}, A-\ref{A4}, A-\ref{A6}, A-\ref{A7} are satisfied. Then, under (\ref{LocAlt})
\begin{equation}
\label{ConGLoc}
\hat{\Gmat}_{u}\left(\thetaveczero\right)\xrightarrow[N\rightarrow{\infty}]{P}\Gmat_{u}\left(\thetaveczero\right).
\end{equation}
\end{Proposition}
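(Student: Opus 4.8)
The plan is to run the law-of-large-numbers argument that underlies relation (\ref{CovLimit}) in Appendix \ref{ConsPropProof}, modified to account for the fact that under (\ref{LocAlt}) the data form the triangular array (\ref{TriangleX}): within the $N$-th row the samples are i.i.d.\ with the $N$-dependent law $P_{\Xmatsc;\thetaveczerosc+\frac{\hvec}{\sqrt{N}}}$, so Khinchine's strong law is no longer applicable and must be replaced by a weak law for triangular arrays. Set $\Ymat_{N,n}\triangleq u^{2}(\Xmat_{N,n})\psivec_{u}(\Xmat_{N,n};\thetaveczero)\psivec^{T}_{u}(\Xmat_{N,n};\thetaveczero)$, so that $\hat{\Gmat}_{u}(\thetaveczero)=N^{-1}\sum_{n=1}^{N}\Ymat_{N,n}$ is a row-average of i.i.d.\ random matrices, and write ${\rm{E}}_{N}[\cdot]$ for expectation under $P_{\Xmatsc;\thetaveczerosc+\frac{\hvec}{\sqrt{N}}}$. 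By (\ref{GDef}), ${\rm{E}}_{N}[\hat{\Gmat}_{u}(\thetaveczero)]={\rm{E}}_{N}[\Ymat_{N,1}]={\rm{E}}[u^{2}(\Xmat)\psivec_{u}(\Xmat;\thetaveczero)\psivec^{T}_{u}(\Xmat;\thetaveczero);P_{\Xmatsc;\thetaveczerosc+\frac{\hvec}{\sqrt{N}}}]$, so the natural decomposition to pursue, entry by entry, is
\begin{equation*}
\hat{\Gmat}_{u}(\thetaveczero)-\Gmat_{u}(\thetaveczero)=\big(\hat{\Gmat}_{u}(\thetaveczero)-{\rm{E}}_{N}[\Ymat_{N,1}]\big)+\big({\rm{E}}_{N}[\Ymat_{N,1}]-\Gmat_{u}(\thetaveczero)\big),
\end{equation*}
and it suffices to prove that the first (centred, stochastic) bracket converges to $\zerovec$ in probability and the second (deterministic) bracket converges to $\zerovec$.

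For the first bracket I would reuse the polynomial growth bound from the proof of Proposition \ref{ConsProp}, namely that under assumption A-\ref{A3} there is $B>0$ with $|[\psivec_{u}(\Xmat;\thetavec)\psivec^{T}_{u}(\Xmat;\thetavec)]_{k,j}|\leq B\sum_{l=0}^{4}\|\Xmat\|^{l}$, so that $|[\Ymat_{N,1}]_{k,j}|\leq B\,u^{2}(\Xmat_{N,1})\sum_{l=0}^{4}\|\Xmat_{N,1}\|^{l}$. By (\ref{Cond}), assumption A-\ref{A4} and H\"{o}lder's inequality the second moments ${\rm{E}}_{N}\big[[\Ymat_{N,1}]^{2}_{k,j}\big]$ are then bounded uniformly in $N$ (for $N$ large enough that $\thetaveczerosc+\frac{\hvec}{\sqrt{N}}\in\Thetasp$), with dominating contribution ${\rm{E}}[u^{4}(\Xmat)\|\Xmat\|^{8};\cdot]$. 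Since the $\Ymat_{N,n}$ are i.i.d.\ within each row, every entry of $\hat{\Gmat}_{u}(\thetaveczero)$ then has variance of order $1/N$, and Chebyshev's inequality yields $\hat{\Gmat}_{u}(\thetaveczero)-{\rm{E}}_{N}[\Ymat_{N,1}]\xrightarrow[N\rightarrow{\infty}]{P}\zerovec$.

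For the second bracket, write each entry as ${\rm{E}}_{N}\big[[\Ymat_{N,1}]_{k,j}\big]=\int_{\XCal}g_{k,j}(\xvec)\,p_{\thetaveczerosc+\frac{\hvec}{\sqrt{N}}}(\xvec)\,d\xvec$, where $g_{k,j}(\xvec)\triangleq u^{2}(\xvec)[\psivec_{u}(\xvec;\thetaveczero)\psivec^{T}_{u}(\xvec;\thetaveczero)]_{k,j}$ and $p_{\thetavecsc}$ denotes the density of $\pxtheta$. By assumption A-\ref{A6} the densities converge pointwise, $p_{\thetaveczerosc+\frac{\hvec}{\sqrt{N}}}(\xvec)\rightarrow p_{\thetaveczerosc}(\xvec)$ for a.e.\ $\xvec\in\XCal$; since these are probability densities whose a.e.\ limit is again a probability density, Scheff\'e's lemma upgrades this to $L_{1}$ convergence, i.e.\ $P_{\Xmatsc;\thetaveczerosc+\frac{\hvec}{\sqrt{N}}}\rightarrow\pxthetazero$ in total variation. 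The uniform-in-$N$ bound on ${\rm{E}}_{N}[|g_{k,j}(\Xmat)|^{2}]$ obtained above makes $\{g_{k,j}\}$ uniformly integrable along the family $\{P_{\Xmatsc;\thetaveczerosc+\frac{\hvec}{\sqrt{N}}}\}$ (de la Vall\'ee-Poussin criterion), so a Vitali / generalized dominated convergence argument gives ${\rm{E}}_{N}\big[[\Ymat_{N,1}]_{k,j}\big]\rightarrow\int_{\XCal}g_{k,j}\,d\pxthetazero=[\Gmat_{u}(\thetaveczero)]_{k,j}$. Assumption A-\ref{A7} (boundedness of the Fisher information matrix) is available to control the likelihood ratio $p_{\thetaveczerosc+\frac{\hvec}{\sqrt{N}}}/p_{\thetaveczerosc}$ if one prefers to phrase this passage to the limit via a change of measure and contiguity. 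Combining the two brackets establishes (\ref{ConGLoc}).

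I expect the second bracket to be the main obstacle: the integrand $g_{k,j}$ is unbounded (polynomial growth times $u^{2}$) and the integrating measure itself varies with $N$, so neither the ordinary dominated convergence theorem nor mere a.e.\ continuity of the density applies directly; the crux is certifying the uniform integrability of $\{g_{k,j}\}$ along the moving family, for which the uniform higher-moment bounds of assumption A-\ref{A4} are indispensable. The first bracket, by contrast, is routine once those moment bounds are in place.
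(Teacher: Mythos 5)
Your proof is correct, but it takes a genuinely different route from the paper's. The paper reduces the claim entry-wise to a scalar triangular array $Y_{N,k}={g}(\Xmat_{N,k})$, proves $Y_{N,1}\xrightarrow{D}Y$ by bounding the difference of characteristic functions via the mean-value expansion (\ref{MVTf}) of the density --- the error term is $\left\|\hvec\right\|_{\Imat_{\rm{FIM}}\left(\thetavecsc^{*}\right)}/\sqrt{N}$, which is where assumption A-\ref{A7} enters --- then shows convergence of the first absolute moments by the same expansion, and finally invokes the weak law of large numbers for triangular arrays (Lemma 15.4.1 in \cite{Lehmann}). You instead split the error into a centred fluctuation, killed by Chebyshev using second moments bounded uniformly in $N$, and a deterministic bias, handled by Scheff\'e's lemma plus uniform integrability. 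Both arguments are sound. Yours is more elementary and self-contained: it needs no off-the-shelf triangular-array limit theorem, it exploits only the second moments that A-\ref{A4} already supplies, and it relies only on the continuity of the density from A-\ref{A6} rather than the differentiability implicitly required by the paper's mean-value identity, so that A-\ref{A7} becomes superfluous in your version. What the paper's route buys is economy of machinery across the appendix --- the identity (\ref{MVTf}) and the bound (\ref{hWNormB}) are needed anyway in Lemma \ref{WConFthetaStar} --- and a quantitative $O(1/\sqrt{N})$ control on the law of $Y_{N,1}$ rather than bare convergence. Your truncation argument for the bias (split at $\left|g_{k,j}\right|\leq M$, bound the tail by the uniform second moments divided by $M$, and send $M\rightarrow\infty$ after $N\rightarrow\infty$) is exactly the right way to discharge the Vitali step that you correctly identify as the crux.
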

\begin{proof}
Define a triangular array of real random variables obtained from the array (\ref{TriangleX}):
\begin{equation}
\label{YNKDef}
Y_{N,k}\triangleq{g}\left(\Xmat_{N,k}\right),\hspace{0.2cm}k=1,\ldots,N,\hspace{0.2cm}N\geq{1},
\end{equation}
where 
\begin{equation}
\label{hDef}
{g}\left(\Xmat\right)\triangleq{u}^{2}\left(\Xmat\right)\left[\psivec_{u}\left(\Xmat;\thetaveczero\right)\right]_{l}\left[\psivec_{u}\left(\Xmat;\thetaveczero\right)\right]_{m}. 
\end{equation}
Since $\Xmat_{N,k}$, $k=1,\ldots,N$ are i.i.d. and the functions $u\left(\cdot\right)$ and $\psivec_{u}\left(\cdot;\cdot\right)$ are real, then $Y_{N,k}$, $k=1,\ldots,N$ are real and i.i.d. Furthermore, define the random variable 
\begin{equation}
\label{YDef}
Y\triangleq{g}\left(\Xmat\right), 
\end{equation}
where $\Xmat$ has probability distribution $\pxthetazero$.

Let $F_{Y_{N,1}}\left(\cdot\right)$ and  $F_{Y}\left(\cdot\right)$ denote the c.d.fs of $Y_{N,1}$ and $Y$, respectively.
We show that 
\begin{equation}
\label{FYNConv}
F_{Y_{N,1}}\left(y\right)\xrightarrow[N\rightarrow{\infty}]{}F_{Y}\left(y\right)\hspace{0.2cm}\forall{y\in{C}},
\end{equation}
where $C\subseteq{\Rsp}$ denotes the set of continuity points of $F_{Y}\left(y\right)$. Let $\zeta_{Y_{N,1}}\left(t\right)$ and $\zeta_{Y}\left(t\right)$ denote the characteristic functions of $Y_{N,1}$ and $Y$, respectively. By (\ref{YNKDef}) and (\ref{YDef}) their difference satisfies
$\left|\zeta_{Y_{N,1}}\left(t\right)-\zeta_{Y}\left(t\right)\right|=\int_{\XCalsc}e^{itg\left(\xvec\right)}(f(\xvec;\thetavec)-f(\xvec;\thetaveczero))d\rho\left(\xvec\right)$, where $f(\xvec;\thetavec)\triangleq{d}\pxtheta(\xvec)/d{\rho(\xvec)}$ is the density function of $\pxtheta$ w.r.t a dominating $\sigma$-finite measure $\rho$ on $\mathcal{S}_{\XCalsc}$. Hence, by (\ref{LocAlt}), (\ref{MVTf}) and assumption A-\ref{A6}
\begin{eqnarray}
\label{ChBound}
\nonumber
\left|\zeta_{Y_{N,1}}\left(t\right)-\zeta_{Y}\left(t\right)\right|&\overset{(a)}\leq&\frac{\int_{\XCalsc}\left|\hvec^{T}\etavec(\Xmat;\thetavec^{*})\right|f\left(\xvec;\thetavec^{*}\right)d\rho\left(\xvec\right)}{\sqrt{N}}
\\\nonumber&=&\frac{{{\rm{E}}\left[\left|\hvec^{T}\etavec(\Xmat;\thetavec^{*})\right|;P_{\Xmatsc;\thetavecsc^{*}}\right]}}{\sqrt{N}}
\\\nonumber&\overset{(b)}\leq&\frac{\sqrt{{{{\rm{E}}\left[\left|\hvec^{T}\etavec(\Xmat;\thetavec^{*})\right|^{2};P_{\Xmatsc;\thetavecsc^{*}}\right]}}}}{\sqrt{N}}
\\&=&\frac{{\left\|\hvec\right\|_{\Imat_{\rm{FIM}}\left(\thetavecsc^{*}\right)}}}{\sqrt{N}},
\end{eqnarray}
where $\etavec(\xvec;\thetavec)\triangleq\nabla_{\thetavecsc}\log{f}(\xvec;\thetavec)$, $\Imat_{\rm{FIM}}\triangleq{\rm{E}}[\etavec(\xvec;\thetavec)\etavec^{T}(\xvec;\thetavec)]$ is the Fisher-information matrix \cite{KayEst}$, \thetavec^{*}$ lies in the line segment connecting $\thetavec=\thetaveczero+\frac{\hvec}{\sqrt{N}}$ and $\thetaveczero$, (a) follows from the triangle inequality, and (b) follows from H\"{o}lder's inequality \cite{MeasureTheory}. Therefore, by (\ref{ChBound}) and (\ref{hWNormB}) $\zeta_{Y_{N}}\left(t\right)\xrightarrow[N\rightarrow{\infty}]{}\zeta_{Y}\left(t\right)$ for any $t\in\Rsp$. Hence, by Theorem  11.2.2 in \cite{Lehmann}  (\ref{FYNConv}) must hold. Next, we show that 
\begin{equation}
\label{ELimY}
{\rm{E}}\left[\left|Y_{N,1}\right|;P_{Y_{N,1}}\right]\xrightarrow[N\rightarrow{\infty}]{}{\rm{E}}\left[\left|Y\right|;P_{Y}\right]<\infty.
\end{equation}
By (\ref{YNKDef}), (\ref{YDef}) and (\ref{MVTf}) it is implied that
\begin{eqnarray}
\nonumber
{\rm{E}}\left[\left|Y_{N,1}\right|;P_{Y_{N,1}}\right]&=&{\rm{E}}\left[\left|Y\right|;P_{Y}\right] 
\\\nonumber
&+&\frac{1}{\sqrt{N}}{\rm{E}}\left[\left|g\left(\Xmat\right)\right|\hvec^{T}\etavec\left(\Xmat;\thetavec^{*}\right);P_{\Xmatsc;\thetavecsc^{*}}\right],
\end{eqnarray}
where, by H\"{o}lder's inequality
\begin{equation}
\nonumber
{\rm{E}}^{2}[\left|g\left(\Xmat\right)\right|\hvec^{T}\etavec\left(\Xmat;\thetavec\right);P_{\Xmatsc;\thetavecsc}]\leq
{\rm{E}}[\left|g\left(\Xmat\right)\right|^{2};P_{\Xmatsc;\thetavecsc}]\left\|\hvec\right\|^{2}_{\Imat_{\rm{FIM}}\left(\thetavecsc\right)}.
\end{equation}
According to (\ref{PsiDef}), (\ref{hDef}), (\ref{PsiBound}), assumption A-\ref{A4}, the compactness of $\Thetasp$ and H\"{o}lder's inequality \cite{MeasureTheory}, there exists a constant $B>0$ such that $h\left(\xvec\right)\triangleq{B}u^{4}\left(\xvec\right)\sum_{r=0}^{8}\left\|\xvec\right\|^{r}\geq|g\left(\xvec\right)|^{2}$ and ${\rm{E}}\left[h\left(\Xmat\right);\pxtheta\right]$ is bounded. Hence, by (\ref{hWNormB}) we conclude that the expectation ${\rm{E}}\left[\left|g\left(\Xmat\right)\right|\hvec^{T}\etavec\left(\Xmat;\thetavec\right);P_{\Xmatsc;\thetavecsc}\right]$ must be bounded, and therefore, the relation (\ref{ELimY}) must hold.

Finally, by (\ref{GDef}), (\ref{hatG}), (\ref{YNKDef})-(\ref{FYNConv}), (\ref{ELimY}) and Lemma 15.4.1 in \cite{Lehmann} we concluded that $\hat{\Gmat}_{u}\left(\thetaveczero\right)\xrightarrow[N\rightarrow{\infty}]{P}\Gmat_{u}\left(\thetaveczero\right)$ 
\end{proof}
\subsection{Auxiliary Lemmas for Proposition \ref{AnorLocLem}:}
\label{AuxLem}
\begin{Lemma}
\label{AGEtathetaLoc}
Assume that conditions  A-\ref{A3}-A-\ref{A5} are satisfied. Under the local alternatives (\ref{LocAlt}), the normalized score function (\ref{TildeJ}) satisfies: 
\begin{equation}
\label{EtaAsNorm}
\etavec_{u}\left(\thetavec\right)\xrightarrow[N\rightarrow{\infty}]{D}\mathcal{N}\left(\zerovec,\Gmat_{u}\left(\thetaveczero\right)\right).
\end{equation}
\end{Lemma}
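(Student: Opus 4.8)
The plan is to exploit that, under the local alternative (\ref{LocAlt}), the observations form a triangular array: for each $N$ the samples $\Xmat_{N,1},\ldots,\Xmat_{N,N}$ are i.i.d.\ with law $P_{\Xmatsc;\thetavecsc_{N}}$, where $\thetavec_{N}\triangleq\thetaveczero+\hvec/\sqrt{N}$, so that the score (\ref{TildeJ}) evaluated at the \emph{moving} true value reads $\etavec_{u}\left(\thetavec_{N}\right)=N^{-1/2}\sum_{k=1}^{N}\Wmat_{N,k}$ with $\Wmat_{N,k}\triangleq u\left(\Xmat_{N,k}\right)\psivec_{u}\left(\Xmat_{N,k};\thetavec_{N}\right)$. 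Since $u\left(\cdot\right)$ and $\psivec_{u}\left(\cdot;\cdot\right)$ are real, for fixed $N$ the row $\Wmat_{N,1},\ldots,\Wmat_{N,N}$ consists of i.i.d.\ real random vectors, so a triangular-array central limit theorem \cite{Lehmann} is the natural tool.

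First I would collect the three moment facts the CLT needs. (i) \emph{Centering}: because $\bar{J}_{u}\left(\thetavec,\bvartheta\right)$ has a stationary point at $\bvartheta=\thetavec$ and the Gaussian surrogate $\Phi^{\left(u\right)}_{\Xmatsc;\thetavecsc}$ matches the MT-mean and MT-covariance of $\qxtheta$, one has ${\rm{E}}\left[u\left(\Xmat\right)\psivec_{u}\left(\Xmat;\thetavec\right);P_{\Xmatsc;\thetavecsc}\right]=\zerovec$ for every $\thetavec\in\Thetasp$ (cf.\ \cite{MTQMLFull}); hence ${\rm{E}}\left[\Wmat_{N,k}\right]=\zerovec$ once $N$ is large enough that $\thetavec_{N}\in\Thetasp$. (ii) \emph{Per-summand covariance}: by (\ref{GDef}), ${\rm{E}}\left[\Wmat_{N,k}\Wmat^{T}_{N,k}\right]=\Gmat_{u}\left(\thetavec_{N}\right)$, and I would prove $\Gmat_{u}\left(\thetavec_{N}\right)\rightarrow\Gmat_{u}\left(\thetaveczero\right)$. (iii) \emph{A Lyapunov moment}: from the polynomial growth bound $\left|\left[\psivec_{u}\left(\Xmat;\thetavec\right)\right]_{k}\right|\leq B\sum_{l=0}^{2}\left\|\Xmat\right\|^{l}$ of (\ref{PsiBound}), assumption A-\ref{A4} and H\"{o}lder's inequality, the fourth moments ${\rm{E}}\left[\left\|\Wmat_{N,k}\right\|^{4}\right]\leq\text{const}\cdot{\rm{E}}\left[u^{4}\left(\Xmat\right)\left(\sum_{l=0}^{2}\left\|\Xmat\right\|^{l}\right)^{4};P_{\Xmatsc;\thetavecsc_{N}}\right]$ are bounded uniformly in $N$.

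To finish I would use the Cram\'{e}r--Wold device. For fixed $\avec\in\Rsp^{m}$ with $\avec\neq\zerovec$, the scalars $\avec^{T}\Wmat_{N,1},\ldots,\avec^{T}\Wmat_{N,N}$ are i.i.d., centred, with variance $\sigma_{N}^{2}\triangleq\avec^{T}\Gmat_{u}\left(\thetavec_{N}\right)\avec\rightarrow\avec^{T}\Gmat_{u}\left(\thetaveczero\right)\avec>0$ (positivity by the non-singularity of $\Gmat_{u}\left(\thetaveczero\right)$ in A-\ref{A5}) and with uniformly bounded fourth moments; Lyapunov's condition with exponent $4$ therefore holds, since $N^{-1}\sigma_{N}^{-4}{\rm{E}}\left[\left|\avec^{T}\Wmat_{N,1}\right|^{4}\right]\rightarrow0$, giving $N^{-1/2}\sigma_{N}^{-1}\sum_{k=1}^{N}\avec^{T}\Wmat_{N,k}\xrightarrow[N\rightarrow{\infty}]{D}\mathcal{N}\left(0,1\right)$; Slutsky's theorem then yields $\avec^{T}\etavec_{u}\left(\thetavec_{N}\right)\xrightarrow[N\rightarrow{\infty}]{D}\mathcal{N}\left(0,\avec^{T}\Gmat_{u}\left(\thetaveczero\right)\avec\right)$, and since this holds for every $\avec$ the claim (\ref{EtaAsNorm}) follows.

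The step I expect to be the main obstacle is (ii), the continuity $\Gmat_{u}\left(\thetavec_{N}\right)\rightarrow\Gmat_{u}\left(\thetaveczero\right)$, because the parameter enters both the integrand (through $\psivec_{u}$) and the density $f\left(\xvec;\thetavec\right)$ of the moving law, so no fixed $\rho$-integrable majorant applies along the sequence $\thetavec_{N}$ by itself. I would handle it exactly as in the proof of Proposition \ref{WConGLoc}: assumption A-\ref{A6} together with Scheff\'{e}'s lemma gives $P_{\Xmatsc;\thetavecsc_{N}}\rightarrow P_{\Xmatsc;\thetaveczerosc}$ in total variation, assumption A-\ref{A3} makes $\psivec_{u}\left(\xvec;\thetavec_{N}\right)\psivec^{T}_{u}\left(\xvec;\thetavec_{N}\right)\rightarrow\psivec_{u}\left(\xvec;\thetaveczero\right)\psivec^{T}_{u}\left(\xvec;\thetaveczero\right)$ pointwise (locally uniformly in $\xvec$), and the \emph{higher} moment bounds in A-\ref{A4} make the family $\{u^{2}\left(\Xmat_{N,1}\right)\left\|\psivec_{u}\left(\Xmat_{N,1};\thetavec_{N}\right)\right\|^{2}\}_{N}$ uniformly integrable (indeed of uniformly bounded second moments, by (\ref{PsiBound}) and A-\ref{A4}); combining these via a Vitali/uniform-integrability argument yields convergence of the expectations, i.e.\ $\Gmat_{u}\left(\thetavec_{N}\right)\rightarrow\Gmat_{u}\left(\thetaveczero\right)$, as required.
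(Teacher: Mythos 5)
Your proposal is correct and follows essentially the same route as the paper: reduce to a row-wise i.i.d.\ triangular array, center via the identity ${\rm{E}}\left[u\left(\Xmat\right)\psivec_{u}\left(\Xmat;\thetavec\right);\pxtheta\right]=\zerovec$, bound the fourth moments uniformly over the compact $\Thetasp$ using the polynomial growth of $\psivec_{u}$ together with A-\ref{A4} and H\"{o}lder, verify Lyapunov's condition with exponent $4$, and conclude by the Cram\'{e}r--Wold device and Slutsky. The one place you depart is the continuity step $\Gmat_{u}\left(\thetavec\right)\rightarrow\Gmat_{u}\left(\thetaveczero\right)$ along the local alternatives (\ref{LocAlt}), which you justify via A-\ref{A6} and a Scheff\'{e}/uniform-integrability argument; the paper simply asserts this convergence under A-\ref{A3}--A-\ref{A5}, so your treatment is the more careful one, though it formally imports A-\ref{A6}, which is not among the lemma's stated hypotheses (it is available wherever the lemma is actually invoked).
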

\begin{proof}
Define a triangular array of real random variables obtained from the array in (\ref{TriangleX}):
\begin{equation}
\label{YNKDef2}  
Y_{N,k}\triangleq{g}\left(\Xmat_{N,k}\right),\hspace{0.2cm}k=1,\ldots,N,\hspace{0.2cm}N\geq{1},
\end{equation} 
where
\begin{equation}
\label{hDef2}   
{g}\left(\xvec\right)\triangleq{u}\left(\xvec\right)\tvec^{T}\psivec_{u}\left(\xvec;\thetavec\right)
\end{equation}
and $\tvec$ is an arbitrary non-zero vector in $\Rsp^{m}$.
Since $\Xmat_{N,k}$, $k=1,\ldots,N$ are i.i.d. and the functions $u\left(\cdot\right)$ and $\psivec_{u}\left(\cdot;\cdot\right)$ are real, then $Y_{N,k}$, $k=1,\ldots,N$ are real and i.i.d.  

In the following we show some properties of the statistical moments of $Y_{N,k}$. By Identity \ref{EPsiZero}, 
\begin{equation}
{\rm{E}}\left[Y_{N,k};P_{Y_{N,1}}\right]=0.
\end{equation}
Moreover, by (\ref{GDef}), (\ref{YNKDef2}), (\ref{hDef2}), assumption A-\ref{A5} and the compactness of $\Thetasp$ there exists a constant $M>0$ such that 
\begin{equation}
\label{EY2}
{\rm{E}}\left[Y^{2}_{N,k};P_{Y_{N,1}}\right]=\left\|\tvec\right\|^{2}_{\Gmat_{u}\left(\thetavecsc\right)}\leq{M}\hspace{0.2cm}\forall\thetavec\in\Thetasp,
\end{equation}
where under (\ref{LocAlt}) we have that
\begin{equation}
\label{tGNormLim}
\left\|\tvec\right\|_{\Gmat_{u}\left(\thetavecsc\right)}\xrightarrow[N\rightarrow{\infty}]{}\left\|\tvec\right\|_{\Gmat_{u}\left(\thetaveczerosc\right)}>0.
\end{equation}
Using (\ref{YNKDef2}), (\ref{hDef2}), (\ref{PsiBound}), assumptions A-\ref{A3}, A-\ref{A4}, the compactness of $\Thetasp$ and H\"{o}lder's inequality \cite{MeasureTheory} it can be shown that the exists a constant ${M}^{\prime}>{0}$, such that 
\begin{equation}
\label{YForthB}
{\rm{E}}\left[\left|Y_{N,k}\right|^{4};P_{Y_{N,1}}\right]={\rm{E}}\left[{g}^{4}\left(\Xmat\right);\pxtheta\right]\leq{M^{\prime}}\hspace{0.2cm}\forall\thetavec\in\Thetasp.
\end{equation}

Clearly, by (\ref{EY2}) 
\begin{equation}
\label{SNEq}
s^{2}_{N}\triangleq{\sum_{k=1}^{N}{\rm{E}}[{Y}^{2}_{N,k};P_{Y_{N,1}}]}={N}\left\|\tvec\right\|^{2}_{\Gmat_{u}\left(\thetavecsc\right)}.
\end{equation}
Therefore, by (\ref{tGNormLim})-(\ref{SNEq})
\begin{equation}
\nonumber
\frac{1}{s^{4}_{N}}\sum_{k=1}^{N}{{\rm{E}}\left[\left|Y_{N,k}\right|^{4};P_{Y_{N,1}}\right]}=\frac{{{\rm{E}}\left[\left|Y_{N,1}\right|^{4};P_{Y_{N,1}}\right]}}
{N\left\|\tvec\right\|^{4}_{\Gmat_{u}\left(\thetavecsc\right)}}\xrightarrow[N\rightarrow{\infty}]{}0.
\end{equation}
Hence, by Lyapounov's central limit theorem \cite{Lehmann}, (\ref{tGNormLim}) and Slutskey's Theorem \cite{MeasureTheory} we conclude that 
\begin{equation}
\label{AsNormY1}
\frac{1}{\sqrt{N}}\sum\limits_{k=1}^{N}Y_{N,k}\xrightarrow[N\rightarrow{\infty}]{D}\mathcal{N}\left(0,\left\|\tvec\right\|^{2}_{\Gmat_{u}\left(\thetaveczerosc\right)}\right).
\end{equation}
The relation (\ref{EtaAsNorm}) follows directly from (\ref{YNKDef2}), (\ref{hDef2}), (\ref{AsNormY1}) and the Cram\'er-Wold Device \cite{Lehmann}. 
\end{proof}
\begin{Lemma}
\label{WConFthetaStar}
Assume that conditions A-\ref{A3}, A-\ref{A4}, A-\ref{A6}, and A-\ref{A7} are satisfied. Under the local alternatives (\ref{LocAlt}) 
\begin{equation}
\label{FhatConv}
\hat{\Fmat}_{u}\left(\thetavec^{*}\right)\xrightarrow[N\rightarrow{\infty}]{P}\Fmat_{u}\left(\thetaveczero\right).
\end{equation}
\end{Lemma}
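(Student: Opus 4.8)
The plan is to prove (\ref{FhatConv}) via the decomposition
\[
\hat{\Fmat}_{u}\left(\thetavec^{*}\right)-\Fmat_{u}\left(\thetaveczero\right)=\left(\hat{\Fmat}_{u}\left(\thetavec^{*}\right)-\hat{\Fmat}_{u}\left(\thetaveczero\right)\right)+\left(\hat{\Fmat}_{u}\left(\thetaveczero\right)-\Fmat_{u}\left(\thetaveczero\right)\right),
\]
and to show that, under the triangular array (\ref{TriangleX}), each bracketed term converges to $\zerovec$ in probability; adding the two limits then gives (\ref{FhatConv}). Throughout I would use that, since $\thetavec^{*}$ lies on the segment joining $\thetaveczero$ and $\thetaveczero+\hvec/\sqrt{N}$, one has the \emph{deterministic} bound $\left\|\thetavec^{*}-\thetaveczero\right\|\leq\left\|\hvec\right\|/\sqrt{N}\to0$, so that for $N$ large $\thetavec^{*}$ stays in a fixed closed ball $\mathcal{B}\subset\Thetasp$ centred at the interior point $\thetaveczero$. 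This squeeze is what lets me handle the random, $N$-dependent argument $\thetavec^{*}$.

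For the second bracket I would argue entrywise, mirroring the proof of Proposition \ref{WConGLoc}. Fix indices $l,m$, set $g\left(\xvec\right)\triangleq-\left[u\left(\xvec\right)\Gammamat_{u}\left(\xvec;\thetaveczero\right)\right]_{l,m}$ and $Y_{N,k}\triangleq g\left(\Xmat_{N,k}\right)$, which are real and i.i.d.\ within each row. By (\ref{GammaDef}), assumption A-\ref{A3} and the polynomial bound behind (\ref{PsiBound}) there is $B>0$ with $\left|g\left(\xvec\right)\right|\leq B\,u\left(\xvec\right)\sum_{l=0}^{2}\left\|\xvec\right\|^{l}$, and then H\"{o}lder's inequality together with A-\ref{A4} gives ${\rm{E}}\left[g^{2}\left(\Xmat\right);\pxtheta\right]<\infty$ uniformly over $\thetavec\in\Thetasp$. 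Exactly as in (\ref{ChBound}), the density mean-value bound (\ref{MVTf}), assumption A-\ref{A6} and the Fisher-information bound A-\ref{A7} then yield $F_{Y_{N,1}}\to F_{Y}$ at continuity points (Theorem 11.2.2 in \cite{Lehmann}) and ${\rm{E}}\left[\left|Y_{N,1}\right|;P_{Y_{N,1}}\right]\to{\rm{E}}\left[\left|Y\right|;P_{Y}\right]<\infty$, where $Y\triangleq g\left(\Xmat\right)$ with $\Xmat$ distributed according to $\pxthetazero$. Lemma 15.4.1 in \cite{Lehmann} then gives $N^{-1}\sum_{k=1}^{N}Y_{N,k}\xrightarrow[N\rightarrow{\infty}]{P}{\rm{E}}\left[Y;P_{Y}\right]$, i.e.\ $\hat{\Fmat}_{u}\left(\thetaveczero\right)\xrightarrow[N\rightarrow{\infty}]{P}-{\rm{E}}\left[u\left(\Xmat\right)\Gammamat_{u}\left(\Xmat;\thetaveczero\right);\pxthetazero\right]=\Fmat_{u}\left(\thetaveczero\right)$ by (\ref{FDef}).

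The first bracket is where I expect the real work to lie, since there $\Gammamat_{u}\left(\cdot;\thetavec^{*}\right)$ is evaluated at a random point that also drifts with $N$, so no law of large numbers applies directly. The idea is to convert this into a small deterministic factor times a tight sample average. Because, by (\ref{GammaDef}) and the Gaussian form of $\phi^{(u)}$, the entries of $\Gammamat_{u}\left(\xvec;\thetavec\right)$ are combinations of $1$, the entries of $\xvec$ and the entries of $\xvec\xvec^{H}$ with coefficients that are continuous functions of $\thetavec$ on the compact $\Thetasp$ (A-\ref{A3}, together with twice continuous differentiability of $(\bSigma^{\left(u\right)}_{\Xmatsc;\thetavecsc})^{-1}$), these coefficients share a common modulus of continuity $\omega\left(\cdot\right)$, $\omega\left(\delta\right)\to0$, that does not depend on $\xvec$; hence $\left\|\Gammamat_{u}\left(\xvec;\thetavec^{*}\right)-\Gammamat_{u}\left(\xvec;\thetaveczero\right)\right\|\leq\omega\left(\left\|\hvec\right\|/\sqrt{N}\right)B'\sum_{l=0}^{2}\left\|\xvec\right\|^{l}$. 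Consequently $\left\|\hat{\Fmat}_{u}\left(\thetavec^{*}\right)-\hat{\Fmat}_{u}\left(\thetaveczero\right)\right\|\leq\omega\left(\left\|\hvec\right\|/\sqrt{N}\right)B'\,N^{-1}\sum_{n=1}^{N}u\left(\Xmat_{N,n}\right)\sum_{l=0}^{2}\left\|\Xmat_{N,n}\right\|^{l}$, and the sample average on the right is bounded in probability: using (\ref{MVTf}), A-\ref{A6}, A-\ref{A7} and H\"{o}lder's inequality with A-\ref{A4} as above, its expectation under $P_{\Xmatsc;\thetaveczerosc+\hvec/\sqrt{N}}$ is bounded uniformly in $N$, so Markov's inequality applies. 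Since $\omega\left(\left\|\hvec\right\|/\sqrt{N}\right)\to0$, the first bracket vanishes in probability, completing the argument. The main obstacle, as noted, is the random $N$-dependent point $\thetavec^{*}$; the resolution rests on extracting an $\xvec$-uniform modulus of continuity for $\Gammamat_{u}\left(\xvec;\cdot\right)$ out of A-\ref{A3} and on the tightness of the polynomial-moment sample average under the shifting laws in (\ref{TriangleX}).
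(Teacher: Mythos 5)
Your proof is correct, but it organizes the argument differently from the paper. The paper does not split $\hat{\Fmat}_{u}(\thetavec^{*})-\Fmat_{u}(\thetaveczero)$ additively; it forms a single triangular array $Y_{N,k}=u(\Xmat_{N,k})\left[\Gammamat_{u}(\Xmat_{N,k};\thetavec^{*})\right]_{l,m}$ in which the drifting evaluation point $\thetavec^{*}$ is built into the summands, proves $F_{Y_{N,1}}\rightarrow F_{Y}$ and ${\rm{E}}\left[\left|Y_{N,1}\right|\right]\rightarrow{\rm{E}}\left[\left|Y\right|\right]<\infty$ (splitting the characteristic-function difference into a piece caused by moving $\thetavec^{*}$ to $\thetaveczero$ inside $\Gammamat_{u}$, killed by bounded convergence and the pointwise continuity from A-\ref{A3}, and a piece caused by the shifting law, killed by the density mean-value expansion together with A-\ref{A6} and A-\ref{A7}), and then invokes Lemma 15.4.1 of \cite{Lehmann} once. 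Your second bracket reproduces exactly this machinery at the fixed point $\thetaveczero$ (it is the argument of Proposition \ref{WConGLoc} with $u\Gammamat_{u}$ in place of $u^{2}\psivec_{u}\psivec_{u}^{T}$), so the genuine difference is your treatment of the first bracket: you replace the paper's pointwise-continuity and bounded-convergence step by an $\xvec$-uniform modulus of continuity for $\Gammamat_{u}(\xvec;\cdot)$, extracted from the fact that the Gaussian log-density Hessian is a quadratic polynomial in $\xvec$ with coefficients uniformly continuous on the compact $\Thetasp$ (the same structure behind Relation \ref{Rel1}), multiplied by an $O_{P}(1)$ sample average controlled via Markov's inequality and A-\ref{A4}. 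This buys a real advantage: since $\thetavec^{*}$ is produced by an entrywise mean-value theorem it is data-dependent, so the paper's row-i.i.d.\ characteristic-function computation with $\thetavec^{*}$ frozen inside the summands is slightly informal, whereas your bound depends on $\thetavec^{*}$ only through the deterministic squeeze $\left\|\thetavec^{*}-\thetaveczero\right\|\leq\left\|\hvec\right\|/\sqrt{N}$ and is therefore insensitive to its randomness. The cost is the extra structural input (the $\xvec$-uniform modulus), which is available here but is not needed on the paper's route.
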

\begin{proof}
Define a triangular array of real random variables obtained from the array (\ref{TriangleX}):
\begin{equation}
\label{YNKDef1}  
Y_{N,k}\triangleq{g}\left(\Xmat_{N,k};\thetavec^{*}\right),\hspace{0.2cm}k=1,\ldots,N,\hspace{0.2cm}N\geq{1},
\end{equation}
where 
\begin{equation}
\label{g1Def}  
{g}\left(\Xmat;\thetavec\right)\triangleq{u}\left(\Xmat\right)\left[\Gammamat_{u}\left(\xvec;\thetavec\right)\right]_{l,m}
\end{equation}
and $\Gammamat_{u}\left(\cdot;\cdot\right)$ is defined in (\ref{GammaDef}). Since $\Xmat_{N,k}$, $k=1,\ldots,N$ are i.i.d. and the functions $u\left(\cdot\right)$ and $\Gammamat_{u}\left(\cdot;\cdot\right)$ are real, then $Y_{N,k}$, $k=1,\ldots,N$ are real and i.i.d. Furthermore, define the random variable 
\begin{equation}
\label{YDef1}  
Y\triangleq{g}\left(\Xmat;\thetaveczero\right), 
\end{equation}
where $\Xmat$ has probability distribution $\pxthetazero$.

Let $F_{Y_{N,1}}\left(\cdot\right)$ and  $F_{Y}\left(\cdot\right)$ denote the c.d.fs of $Y_{N,1}$ and $Y$, respectively.
We show that 
\begin{equation}
\label{FYNConv1}
F_{Y_{N,1}}\left(y\right)\xrightarrow[N\rightarrow{\infty}]{}F_{Y}\left(y\right)\hspace{0.2cm}\forall{y\in{C}},
\end{equation}
where $C\subseteq{\Rsp}$ denotes the set of continuity points of $F_{Y}\left(y\right)$. Let $\zeta_{Y_{N,1}}\left(t\right)$ and $\zeta_{Y}\left(t\right)$ denote the characteristic functions of $Y_{N,1}$ and $Y$, respectively. By (\ref{YNKDef1}), (\ref{YDef1}), assumption A-\ref{A6} and Identity \ref{MVTfID} their difference satisfies:
\begin{eqnarray}
\label{Bound1}
\nonumber
\left|\zeta_{Y_{N,1}}\left(t\right)-\zeta_{Y}\left(t\right)\right|&=&\Big|{\rm{E}}\left[e^{ig\left(\Xmatsc;\thetavecsc^{*}\right)t}-e^{ig\left(\Xmatsc;\thetaveczerosc\right)t};\pxthetazero\right]
\\\nonumber&+&\frac{1}{\sqrt{N}}{\rm{E}}\left[\hvec^{T}\etavec\left(\Xmat;\thetavec^{*}\right);P_{\Xmatsc;\thetavecsc^{*}}\right]\Big|
\\\nonumber
&\overset{(a)}\leq&{\rm{E}}\left[\left|e^{ig\left(\Xmatsc;\thetavecsc^{*}\right)t}-e^{ig\left(\Xmatsc;\thetaveczerosc\right)t}\right|;\pxthetazero\right]
\\\nonumber&+&\frac{1}{\sqrt{N}}{\rm{E}}\left[\left|\hvec^{T}\etavec\left(\Xmat;\thetavec^{*}\right)\right|;P_{\Xmatsc;\thetavecsc^{*}}\right]
\\
&\overset{(b)}\leq&{\rm{E}}\left[\left|s\left(\Xmat;\thetavec^{*},\thetaveczero,t\right)\right|;\pxthetazero\right]
\\\nonumber&+&\frac{1}{\sqrt{N}}\left\|\hvec\right\|_{\Imat_{\rm{FIM}}\left(\thetavecsc^{*}\right)},
\end{eqnarray}
where $\etavec(\xvec;\thetavec)\triangleq\nabla_{\thetavecsc}\log{f}(\xvec;\thetavec)$, $\Imat_{\rm{FIM}}\triangleq{\rm{E}}[\etavec(\xvec;\thetavec)\etavec^{T}(\xvec;\thetavec)]$ is the Fisher-information matrix \cite{KayEst}, and the term $s\left(\xvec;\thetavec,\thetaveczero,t\right)\triangleq2\sin\left(\left(g\left(\xvec;\thetavec\right)-g\left(\xvec;\thetaveczero\right)\right)t\right)$.
We note that (a) follows from the triangle inequality, while (b) follows from H\"{o}lder's inequality \cite{MeasureTheory} and the definition of 
the Fisher-information matrix. Notice that $s\left(\xvec;\thetavec,\thetaveczero,t\right)$ is bounded and by (\ref{GammaDef}), (\ref{g1Def}) and A-\ref{A3} it is also continuous at $\thetavec=\thetaveczero$ for any $\xvec\in\XCal$. Therefore, since $\thetavec^{*}\xrightarrow[N\rightarrow{\infty}]{}\thetaveczero$, by the bounded convergence theorem \cite{MeasureTheory} we conclude that ${\rm{E}}\left[\left|s\left(\Xmat;\thetavec^{*},\thetaveczero,t\right)\right|;\pxthetazero\right]\xrightarrow[N\rightarrow{\infty}]{}0$ for any $t\in\Rsp$. Furthermore, by assumption A-\ref{A7} and (\ref{hWNormB}) the weighted Euclidean norm ${{\left\|\hvec\right\|^{2}_{\Imat_{\rm{FIM}}\left(\thetavecsc\right)}}}$ is bounded over $\Thetasp$, and therefore, by (\ref{Bound1}) $  \zeta_{Y_{N}}\left(t\right)\xrightarrow[N\rightarrow{\infty}]{}\zeta_{Y}\left(t\right)$ for any $t\in\Rsp$. Hence, by Theorem  11.2.2 in \cite{Lehmann} (\ref{FYNConv1}) must hold.

Next, we show that 
\begin{equation}
\label{ELimY1}
{\rm{E}}\left[\left|Y_{N,1}\right|;P_{Y_{N,1}}\right]\xrightarrow[N\rightarrow{\infty}]{}{\rm{E}}\left[\left|Y\right|;P_{Y}\right]<\infty.
\end{equation}
By (\ref{YNKDef1}), (\ref{YDef1}) and (\ref{MVTf})
\begin{eqnarray}
\label{ELimYN1}
{\rm{E}}\left[\left|Y_{N,1}\right|;P_{Y_{N,1}}\right]&=&{\rm{E}}\left[\left|g\left(\Xmat;\thetavec^{*}\right)\right|;\pxthetazero\right] 
\\\nonumber
&+&\frac{{\rm{E}}\left[\left|g\left(\Xmat;\thetavec^{*}\right)\right|\hvec^{T}\etavec\left(\Xmat;\thetavec^{*}\right);P_{\Xmatsc;\thetavecsc^{*}}\right]}{\sqrt{N}}
\end{eqnarray}
According to (\ref{GammaDef}), (\ref{g1Def}) and assumption A-\ref{A3} $g\left(\xvec,\thetavec\right)$ is continuous in $\Thetasp$ 
for any $\xvec\in\XCal$. Moreover, by (\ref{GammaDef}), (\ref{g1Def}), (\ref{GammaBound}), assumptions A-\ref{A3}, A-\ref{A4}, the compactness of $\Thetasp$ and H\"{o}lder's inequality \cite{MeasureTheory} there exist positive constants $B,C$ such that:
\begin{equation}
\label{hdef}
h\left(\xvec\right)\triangleq{B}\sum_{r=0}^{2}\left\|\xvec\right\|^{r}\geq\left|g\left(\xvec,\thetavec\right)\right|
\end{equation}
and 
\begin{equation}
\label{hIntBound}
{\rm{E}}\left[h\left(\Xmat\right);\pxtheta\right]\leq{\rm{E}}\left[h^{2}\left(\Xmat\right);\pxtheta\right]\leq{C}.
\end{equation}
Therefore, since $\thetavec^{*}\xrightarrow[N\rightarrow{\infty}]{}\thetaveczero$, by (\ref{YDef1}), (\ref{hdef}), (\ref{hIntBound}) and the dominated convergence theorem \cite{MeasureTheory} we conclude that the expectation ${\rm{E}}\left[\left|g\left(\Xmat;\thetavec^{*}\right)\right|;\pxthetazero\right]\xrightarrow[N\rightarrow{\infty}]{}{\rm{E}}\left[\left|g\left(\Xmat;\thetaveczero\right)\right|;\pxthetazero\right]\triangleq{\rm{E}}\left[\left|Y\right|;P_{Y}\right]<\infty$. We now show that the second summand in (\ref{ELimYN1}) converges to zero as $\thetavec^{*}$ approaches $\thetaveczero$. By H\"{o}lder's inequality \cite{MeasureTheory}, (\ref{hdef}) and (\ref{hIntBound})
$\frac{\left|{\rm{E}}\left[\left|g\left(\Xmatsc;\thetavecsc^{*}\right)\right|\hvec^{T}\etavecsc\left(\Xmatsc;\thetavecsc^{*}\right);P_{\xvec;\thetavecsc^{*}}\right]\right|^{2}}{\left\|\hvec\right\|^{2}_{\Imat_{\rm{FIM}}\left(\thetavecsc^{*}\right)}}\leq
{{\rm{E}}\left[\left|g\left(\Xmat;\thetavec^{*}\right)\right|^{2};P_{\Xmatsc;\thetavecsc^{*}}\right]}
\leq{{\rm{E}}\left[h^{2}\left(\Xmat\right);P_{\Xmatsc;\thetavecsc^{*}}\right]}\leq{C}$. Therefore, by (\ref{hWNormB}) we conclude that the nominator in (\ref{ELimYN1}) is bounded, and hence, the right summand in the l.h.s. of (\ref{ELimY1}) approaches zero as $N\rightarrow\infty$.

Finally, by (\ref{GammaDef}), [Fhat], (\ref{YNKDef1})-(\ref{FYNConv1}), (\ref{ELimY1}) and Lemma 15.4.1 in \cite{Lehmann} we conclude that (\ref{FhatConv}) holds.
\end{proof}


\subsection{Useful relations and identities:}
\begin{Relation}
\label{Rel1}
Assume that the parameter space $\Thetasp$ is compact. Furthermore, assume that $\muvec^{\left(u\right)}_{\Xmatsc}\left(\thetavec\right)$ and $\bSigma^{\left(u\right)}_{\Xmatsc}\left(\thetavec\right)$ are twice continuously differentiable in $\Thetasp$. Define $d\left(\xvec\right)\triangleq\sum_{r=0}^{2}\left\|\xvec\right\|^{r}$. According to (\ref{PsiDef}), (\ref{GammaDef}) the triangle inequality, the Cauchy-Schwartz inequality and its the matrix extension \cite{CS},
there exist positive constants $B_{1}$ and $B_{2}$, such that for any $(\xvec,\thetavec)\in\XCal\times\Thetasp$:
\begin{equation}
\label{PsiBound}
\left[\psivec_{u}(\xvec;\thetavec)\right]_{k}\leq{B_{1}}d\left(\xvec\right),\hspace{0.2cm}\forall{k=1,\ldots,m}
\end{equation}
and
\begin{equation}
\label{GammaBound}
\left[\Gammamat_{u}\left(\xvec;\thetavec\right)\right]_{k,l}\leq{B_{2}}d\left(\xvec\right)\hspace{0.2cm}\forall{k,l}=1,\ldots,m.
\end{equation}
\end{Relation}
\begin{Relation}
\label{RelFIM}
Assume that the parameter space $\Thetasp$ is compact. Furthermore, assume that the Fisher-Information matrix ${\Imat_{\rm{FIM}}\left(\thetavec\right)}$  \cite{KayEst} is continuous in $\Thetasp$. 
For any $\hvec\in\Rsp^{m}$ there exists a positive constant $M$ such that
\begin{equation}
\label{hWNormB}
{{\left\|\hvec\right\|^{2}_{\Imat_{\rm{FIM}}\left(\thetavecsc\right)}}}\leq{M}\hspace{0.2cm}\forall\thetavec\in\Thetasp.
\end{equation} 
\end{Relation}
\begin{Identity} 
\label{MVTpsiID}  
Assume that the vector function $\psivec_{u}\left(\xvec;\thetavec\right)$ (\ref{PsiDef}) is continuous in $\Thetasp$ for any $\xvec\in\XCal$. Furthermore, let $\thetavec=\thetavec_{0}+\frac{\hvec}{\sqrt{N}}$.
By the mean-value Theorem \cite{AdCalMult}, applied to each entry of $\psivec_{u}\left(\xvec;\thetavec\right)$ and (\ref{GammaDef}), the normalized score function $\etavec_{u}\left(\thetavec\right)$ (\ref{TildeJ}) satisfies:
\begin{equation}
\label{MVTpsi} 
\etavec_{u}\left(\thetaveczero\right)=\etavec_{u}\left(\thetavec\right) + \hat{\Fmat}_{u}\left(\thetavec^{*}\right)\hvec
\end{equation}
where $\hat{\Fmat}_{u}\left(\cdot\right)$ is defined below (\ref{EMSE}) and $\thetavec^{*}$ lies in the line segment connecting $\thetavec$ and $\thetaveczero$. 
\end{Identity}
\begin{Identity}
\label{EPsiZero}
The vector function $\psivec_{u}\left(\Xmat;\thetavec\right)$ (\ref{PsiDef}) satisfies:
\begin{equation}
\label{EPsiU}
{\rm{E}}\left[u\left(\Xmat\right)\psivec_{u}\left(\Xmat;\thetavec\right);\pxtheta\right]=\zerovec.
\end{equation}
[The proof appears in Lemma 6 in \cite{MTQMLFull}]
\end{Identity}
\begin{Identity} 
\label{MVTfID}  
Let $f\left(\xvec;\thetavec\right)\triangleq{d}\pxtheta(\xvec)/d\rho(\xvec)$ denote the density function of $\pxtheta$ w.r.t. a $\sigma$-finite measure $\rho$ on $\mathcal{S}_{\XCalsc}$.
Assume that $f\left(\xvec;\thetavec\right)$ is continuous in $\Thetasp$ $\rho$-a.e. Furthermore, let $\thetavec=\thetavec_{0}+\frac{\hvec}{\sqrt{N}}$.
By the mean-value Theorem \cite{AdCalMult} 
\begin{equation}
\label{MVTf}
f\left(\xvec;\thetavec\right)=f(\xvec;\thetaveczero)+\frac{\hvec^{T}}{\sqrt{N}}\etavec(\xvec;\thetavec^{*})f(\xvec;\thetavec^{*})\hspace{0.2cm}\rho{\textrm{-a.e.}},
\end{equation}
where $\etavec(\xvec;\thetavec)\triangleq\nabla_{\thetavecsc}\log{f}(\xvec;\thetavec)$ and $\thetavec^{*}$ lies in the line segment connecting $\thetavec$ and $\thetaveczero$. 
\end{Identity}



\newpage
\bibliographystyle{IEEEbib}
\bibliography{strings,refs}

\end{document}